\newtheorem{theorem}{Theorem}
\newtheorem{assumption}{Assumption}
\newtheorem{remark}{Remark}
\newtheorem{proof}{Proof}
\newtheorem{problem}{Problem}
\newcommand{\Int}{{\rm {Int}}}
\newcommand{\reals}{\mathbb{R}}
\newcommand{\f}{\bar{f}}
\newcommand{\g}{\bar{g}}
\newcommand{\x}{\bar{\bm{x}}}
\newcommand{\bmd}{\bar{\bm{d}}}
\newcommand{\dtau}{{\rm d}\tau}
\newcommand{\K}{\mathcal{K}}
\newcommand{\sign}{{\rm{sign}}}
\newcommand{\classK}{class-$\K$ }
\newcommand{\D}{\partial}
\def\BibTeX{{\rm B\kern-.05em{\sc i\kern-.025em b}\kern-.08em
    T\kern-.1667em\lower.7ex\hbox{E}\kern-.125emX}}
\begin{document}
\setlength{\abovedisplayskip}{2pt}   
\setlength{\belowdisplayskip}{2pt}
\setlength{\abovedisplayshortskip}{2pt}
\setlength{\belowdisplayshortskip}{2pt}
\title{Robust Safety Critical Control Under Multiple State and Input Constraints: Volume Control Barrier Function Method}
\author{\vskip 1em
	Jinyang Dong,
	Shizhen Wu,
     Rui Liu,
	Xiao Liang, \emph{Senior Member,~IEEE},
	\\Biao Lu, \emph{Member,~IEEE},
    and Yongchun Fang, \emph{Senior Member,~IEEE}
    \thanks{
        This work is supported in part by the National Natural Science Foundation of China under Grant 62233011; the Natural Science Foundation of Guangdong Province under Grant 2025A1515011967;
        and in part by the Key
Technologies R \& D Program of Tianjin under Grant 23YFZCSN00060.   \emph{(Corresponding author: Yongchun Fang)}

		
		The authors are with the Institute of Robotics and Automatic Information System, College of Artificial Intelligence, Nankai University, Tianjin 300071, China, and Institute of Intelligence Technology and Robotic Systems, Shenzhen Research Institute of Nankai University, Shenzhen 518083, China.  (e-mail: \url{dongjinyang@mail.nankai.edu.cn}; \url{szwu@mail.nankai.edu.cn}; \url{1911635@mail.nankai.edu.cn}; \url{liangx@nankai.edu.cn}; \url{lubiao@nankai.edu.cn}; \url{fangyc@nankai.edu.cn}).
	}
}

\maketitle
\begin{spacing}{1}	
\begin{abstract}
In this paper, the safety-critical control problem for uncertain systems under multiple control barrier function (CBF) constraints and input constraints is investigated.
A novel framework is proposed to generate a safety filter that minimizes changes to reference inputs when safety risks arise, ensuring a balance between safety and performance.
A nonlinear disturbance observer (DOB) based on the robust integral of the sign of the error (RISE) is used to estimate system uncertainties, ensuring that the estimation error converges to zero exponentially.
This error bound is integrated into the safety-critical controller to reduce conservativeness while ensuring safety.
To further address the challenges arising from multiple CBF and input constraints, a novel Volume CBF (VCBF) is proposed by analyzing the feasible space of the quadratic programming (QP) problem. 
To ensure that the feasible space does not vanish under disturbances, a DOB-VCBF-based method is introduced, ensuring system safety while maintaining the feasibility of the resulting QP.
Subsequently, several groups of simulation and experimental results are provided to validate the effectiveness of the proposed controller.
\end{abstract}
\begin{IEEEkeywords}
Multiple control barrier functions, robust control, safety-critical systems,  disturbance estimation.
\end{IEEEkeywords}

\markboth{IEEE TRANSACTIONS ON INDUSTRIAL ELECTRONICS}%
{}

\definecolor{limegreen}{rgb}{0.2, 0.8, 0.2}
\definecolor{forestgreen}{rgb}{0.13, 0.55, 0.13}
\definecolor{greenhtml}{rgb}{0.0, 0.5, 0.0}

\section{Introduction}

\IEEEPARstart {A}{s} automation systems have become integral to our daily lives, the development of safe and high-performance controllers for these systems is of paramount importance.
To meet this need, the Control Barrier Function (CBF) is a powerful tool to ensure the safety of control systems\cite{ames2019control}.
The CBF-based controllers are typically implemented in the form of Quadratic Programming (QP), where CBF conditions and actuator limits are incorporated as constraints.
Due to the low computational complexity of CBFs, they are successfully applied in various automation and robotic systems\cite{xie2025certificated,10505850,9811271,10601510}.


The modular design of CBF allows multiple conditions to be incorporated into the QP problem to handle multi-state constraints.
However, the feasibility of the QP problem heavily relies on the compatibility of CBF conditions, and potential conflicts among constraints make the design process highly challenging.
As multiple constraints may lead to conflicts, ensuring the compatibility of CBF conditions is a critical challenge in the design process \cite{saveriano2019learning,10363340}.
To address this issue, some studies attempt to synthesize multiple CBFs into one.
For instance, \cite{7937882} and \cite{9247270} adopt different approaches, merging multiple barrier functions using Boolean operators, all of which may result in performance degradation (including undesirable oscillatory behavior)\cite{black2023consolidated}.
An adaptive method that synthesizes multiple CBFs into a candidate CBF with parameters updated online is proposed in \cite{10383597}.
However, the optimization problem underlying this adaptation law must satisfy Slater's condition, which limits its flexibility.
In \cite{parwana2023feasible}, H. Parwana et al. define the volume of the feasible space as the volume CBF (VCBF) to address feasibility issues arising from conflicts between state and input constraints.
Compared with other solutions, it is easy to implement, suitable for scenarios with high real-time requirements, and exhibits good adaptability in high-dimensional state spaces or complex input constraints, reducing sensitivity to control parameters.
It is well established that disturbances are unavoidable in practical implementations \cite{emam2019robust, 10230907,10591233}.
To the best of our knowledge, how to ensure the feasibility of QP control laws under multi-state and input constraints in the presence of disturbances remains an area requiring further investigation.
Therefore, this paper will investigate the problem of robust safety critical control for uncertain systems under multiple state and input constraints.
Disturbance observers (DOB), as an effective tool for estimating disturbances based on system dynamics and measurable states, are widely applied in robotics and automotive systems \cite{10403533,8970337,9447165}.
Building on this, this paper argues that by effectively leveraging the strengths of VCBF and DOB, it is possible to develop a promising solution to ensure the feasibility of QP control laws under multi-state and input constraints in the presence of disturbances.
However, this approach faces notable challenges.
First, the VCBF proposed in \cite{parwana2023feasible} is specifically designed for disturbance-free scenarios and may fail when disturbances are present.
Second, existing DOB-CBF methods assume that estimation errors eventually converge to a bounded set \cite{10156095,10388430,10617810}.
Combining these methods with the approach in \cite{parwana2023feasible} to construct the feasible space could introduce a certain degree of conservativeness.


To address these challenges, this paper aims to further develop the VCBF method proposed in \cite{parwana2023feasible}.
In this paper, the design of a robust controller for uncertain systems subject to multiple state and input constraints is investigated, and a novel framework is proposed to ensure safety while balancing system performance.
Inspired by the results of \cite{patil2021exponential}, an observer based on the robust integral of the sign of the error (RISE) is designed to estimate disturbances, guaranteeing that the estimation error exponentially converges to zero.
Utilizing this property, the time-varying lower bound of the estimation error is integrated into the QP-based controller, which ensures safety while greatly reducing the conservativeness of the safe input set.
To further address challenges arising from multiple CBF constraints as well as input constraints, 
the VCBF is incorporated into the QP formulation.
A novel control law is designed to prevent conflicts between multiple CBFs even in the presence of disturbances. 
Finally, to validate the proposed method, various simulation and experiments are conducted.
To the best of our knowledge, this paper is the first to experimentally verify the feasibility guarantee control method based on VCBF.
The main contributions of this paper are as follows:
\begin{enumerate}
\item
Inspired by the latest RISE-based DOB research \cite{patil2021exponential}, which for the first time proposed an observer with exponential convergence rates, we leverage time-varying bounds of the estimated disturbance and estimation error to replace the unknown terms in the high order CBF (HOCBF). This approach ensures safety while reducing the conservativeness of the safe input set.
\item
It further proposes a DOB-VCBF-based method to ensure that the feasible space does not vanish under disturbances. Furthermore, the proposed control law is further integrated with the DOB-HOCBF method and solved through a QP problem, ensuring system safety while maintaining compatibility between input constraints and multiple CBF constraints.
\item Extensive simulation and experiments are conducted to validate the effectiveness of the proposed controller, demonstrating its ability to ensure safety and maintain robust performance under uncertain conditions.
\end{enumerate}

The remaining parts of this paper are organized as follows.
After the preliminaries are introduced in Section \ref{sec:b}, Section \ref{sec:c} discusses the problem of robust safe control with multiple state and input constraints.
Section \ref{sec:d} analyzes the RISE-based observer and designs a DOB-VCBF-QP based control law.
Subsequently, several groups of simulation and experimental results are given to validate the effectiveness of the proposed controller in Section \ref{sec:e}.
Finally, Section \ref{sec:f} summarizes this paper.

\emph{Notations:} The real number is represented as $\reals$. $\reals^+_0$ denotes the nonnegative real number.  $[\rm N]$ denotes the set of integers $\{1, 2, \dots, N\}$.
$\Gamma(\cdot)$ is the $\Gamma$ function.
The interior and boundary of a set $\mathcal{C}$ are indicated by $\textrm{Int}(\mathcal{C})$ and $\partial \mathcal{C}$.
The empty set is denoted by $\emptyset$.
A continuous function $\alpha:\reals\rightarrow\reals$ belongs to extended \classK ($\alpha\in\K_e$) if it is strictly increasing and $\alpha(0)=0, \alpha(r)\rightarrow+\infty$ as $r\rightarrow+\infty$, $\alpha(r)\rightarrow-\infty$ as $r\rightarrow-\infty$.
$\|y\|$ represents the Euclidean norm of $y$.
$\sign(\cdot)$ denotes the symbolic function.
For the system $\dot{\bm x}=f(\bm x)+g(\bm x)\bm u$, the Lie derivatives of $h(\bm x)$ are $L_{f}h(\bm x)=\frac{\D h}{\D \bm x}f(\bm x)$ and $L_gh(\bm x)=\frac{\D h}{\D \bm x}g((\bm x)$. Specifically, when $f(\bm x)$ is the identity matrix $\bm I$, the Lie derivative of is denoted by $L_Ih(\bm x)$.
\end{spacing}
\begin{spacing}{1}
\section{Preliminaries}\label{sec:b}
Consider a nonlinear, control-affine system with unknown disturbance of the form
\begin{align}
  \label{equ:sys1}
  \dot{\bm x}&=f(\bm x)+g(\bm x)\bm u+\bm d,
\end{align}
where $\bm x \in\bm X\subset\mathbb{R}^n$ is the state, $\bm u \in \bm U\subset\mathbb{R}^m$ is the control input, $\bm d\in \bm D\subset\reals^n$ represents the unknown disturbance.
$f:\bm X \to \mathbb{R}^n$ and $g:\bm X \to \mathbb{R}^{n\times m}$ are sufficiently smooth functions.
The sets $\bm U$ and $\bm D$ are compact.
Specifically, $\bm U$ is defined as $\bm U=\{\bm u\in \reals^m|\bm A_m\bm u\leq \bm b_m\}$, where $\bm A_m\in\reals^{2m\times m}$, $\bm b_m\in\reals^{m}$ are both constant matrices.
Under this definition, $\bm U$ forms a convex polytope.
To simplify the description, the time $t$ is considered as an extended state. By defining the extended state vector $\bar{\bm x}=[\bm x^\top,t]^\top$, the system (\ref{equ:sys1}) can be rewritten as
\begin{align}
  \label{equ:sys2}
  \dot{\bar{\bm x}} &= \underbrace{\left[\begin{array}{c} f(\bm x) \\ \hline 1 \end{array}\right]}_{\f(\bar{\bm x})} + \underbrace{\left[\begin{array}{c} g(\bm x) \\ \hline \bm 0_{1\times m} \end{array}\right]}_{\g(\bar{\bm x})} \bm u+\bar{\bm d},
\end{align}
where $\x\in \bar{\bm X}\subset\bm X\times \reals^+$, $\bmd=[\bm d^\top,0]^\top$.

\subsection{Control Barrier Function}
Suppose that the safe set $\bm S$ is defined as the zero-superlevel set of a continuously differentiable function $h(\bar{\bm x})$:
\begin{subequations}
 \label{equ:Si}
    \begin{align}
        \bm S &\triangleq \{\x\in \bar{\bm X}: h(\bar{\bm x}) \geq 0 \}, \\
        \Int \bm S &\triangleq \{\x\in \bar{\bm X}: h(\bar{\bm x}) > 0 \}, \\
        \partial \bm S &\triangleq \{\x\in \bar{\bm X}: h(\bar{\bm x}) = 0 \}.
    \end{align}
\end{subequations}
Additionally, we assume that there exists a locally Lipschitz continuous controller $\bm u=\bm \pi(\x)$, where $\bm \pi:\bar{\bm X}\to\reals^m$.
When the disturbance does not exist, i.e., $\bmd=\bm 0$,
the controller $\bm\pi$ keeps (\ref{equ:sys2}) safe with respect to $\bm S$, if for any initial state $\x(0)\in\bm S$, the solution to (\ref{equ:sys2}), defined as $\x(t)$, remains within the safe set $\bm S$.

\emph{Definition 1 (CBF):}
Let $\bm S$ be the zero-superlevel set of a continuously differentiable function $h(\bar{\bm x})$ with $\frac{\D h}{\D x}\neq 0$ when $h(\bar{\bm x})=0$.
Then, $h(\bar{\bm x})$ is a CBF for system (\ref{equ:sys2}) on $\bm S$ if there exists an extended $\mathcal{K}$ function $\alpha$ satisfying
\begin{align}
  \label{equ:limit_U}
  \sup_{u\in \bm U}\!\Big[\!L_{\bar f}h(\x)\!\!+\!\!L_{\bar g}h(\x)\bm u\Big]\!\geq\!-\alpha(h(\x)).\!
\end{align}

However, there may be situations where $L_{\bar g}h(\x)=0$ in practical applications, that is, $\bm u$ does not show up in (\ref{equ:limit_U}). Therefore,  this CBF constraint cannot be used to formulate an optimization problem directly.
In \cite{tan2021high}, high-order CBF (HOCBF) is introduced as a way to enforce high relative degree safety constraints. The $r$th-order differentiable function $h(\x)$ is said to have input relative degree $r$ if for all $\x\in\bar{\bm X}$:
\begin{align}
    L_{\bar g}L_{\bar f}^{r-1}h(\x)&\!\neq\!0,\nonumber\\
    L_{\bar g}h(\x)= L_{\bar g}L_{\bar f}h(\x)=\!\cdots\!= L_{\bar g}L_{\bar f}^{r-2}h(\x)&\!=\! 0.\nonumber
\end{align}
Then, for a differentiable function $h(\x)$, consider a sequence of functions:
\begin{align}
    \label{equ:highorder_phi}
   \phi^0(\x)&=h(\x),\nonumber\\
   \phi^k(\x)&=\dot{\phi}^{k-1}(\x)+\alpha^k(\phi^{k-1}(\x)),~k\in[r-1],
\end{align}
where $\alpha^k\in\K$ is the $(r-k)$th differentiable function.
The associated extended sets $S^{k-1}, k\in[r]$ is further defined as:
\begin{align}
\label{equ:highorder_Sik}
    \bm S^{k-1}\triangleq\{\x\in \bar{\bm X}: \phi^{k-1}(\bar{\bm x}) \geq 0 \}.
\end{align}
Safety with respect to set $\bigcap_{k = 0}^{r-1}\bm S^k$ can be guaranteed through the HOCBF.

\emph{Definition 2 (HOCBF):}
Let $\phi^{k-1}(\x),k\in[r]$ be defined as (\ref{equ:highorder_phi}) and $ \bm S^{k-1} ,k\in[r]$ be defined as (\ref{equ:highorder_Sik}).
$h(\x)$ is called a HOCBF for system (\ref{equ:sys2}) with input relative degree $r$ on the set $\bigcap_{k = 0}^{r-1}\bm S^k$ if there exist $(r-k)$th order differentiable \classK functions $\alpha^k, k\in[r-1]$ and a \classK function $\alpha^{r}$ satisfying:
\begin{align}
  \label{equ:highorder_limit_U}
  \!\!\sup_{u\in \bm U}\!\Big[\!L_{\bar f}^{r}h(\x)\!+\!\!L_{\bar g}\!L_{\bar f}^{r\!-\!1}h(\x)\bm u+\!\mathcal{O}(h(\x)\!)\!\Big]\!\!\geq\!-\alpha^{r}(\phi^{r\!-\!1}(\x)),
\end{align}
where $\mathcal{O}(h(\x))=\Sigma_{k=1}^{r-1}L_{\bar f}^k(\alpha^{r-k}\circ\phi^{r-k-1})(\x)$.

In this paper, $\alpha^{k}, k\in[r]$ is chosen as a linear function for convenience. Thus, (\ref{equ:highorder_limit_U}) can be rewritten as
\begin{align}
  \label{equ:highorder_limit_U1}
  \sup_{u\in \bm U}\!\Big[\Gamma(\x,\bm u)\!=\!L_{\bar f}^{r}h(\x)\!+\!L_{\bar g}\!L_{\bar f}^{r\!-\!1}\!h(\x)\bm u\Big]\geq-\bm K^\top\bm\eta(\x),
\end{align}
where $\bm\eta(\x)=[L_{\bar f}^{r-1}h(\x), L_{\bar f}^{r-2}h(\x),\dots, h(\x)]^\top$, $\bm K=[k^1,\dots,k^{r}]^\top$ is the parameter chosen such that the roots  of $\lambda^{r}+k^1\lambda^{r-1}+\cdots+k^{r}=0$ are all negative reals $-\lambda_i<0,i\in[r]$.
Considering an HOCBF $h(\x)$, the set of control inputs satisfying the HOCBF condition is defined as:
\begin{align}
  \label{equ:khocbf}
  {\bm U}_{{\rm H}}(\x)=\{&\bm u\in \bm U: \Gamma(\x,\bm u)\geq-\bm K^\top\bm\eta(\x)\}.
\end{align}

\emph{Lamma 1 \cite{tan2021high}:}
Given an HOCBF $h(\x)$ from Definition 2 with the associated sets $\bm S^{k-1} ,k\in[r]$ defined by (\ref{equ:highorder_Sik}), any Lipschitz continuous controller $\bm u\in{\bm U}_{{\rm H}}(\x)$ will render the set $\bm S^{k-1} ,k\in[r]$ forward invariant for the system (\ref{equ:sys2}).


\subsection{RISE-based Disturbance Observer}
In practical systems, there exist unknown disturbance $\bmd$, which presents inequalities (\ref{equ:limit_U}) and (\ref{equ:highorder_limit_U1}) from being directly applicable.
A feasible approach to address this problem is to estimate the uncertainty and use a quantified error bound to manage the estimation error.
\begin{assumption}\label{assumption:bounded_continuity}
The disturbance $\bmd$ is assumed to be continuous, and its first and second derivatives $\dot{\bmd}$ and $\ddot{\bmd}$ are also continuous and bounded. Specifically, there exist positive constants $\delta_1$, $\delta_2$ and $\delta_3$ such that:
\[
\|\bmd\| \leq \delta_1, \quad \|\dot\bmd\| \leq \delta_2, \quad \|\ddot{\bmd}\| \leq \delta_3.
\]
\end{assumption}

Motivated by \cite{patil2021exponential} and \cite{10413553}, the RISE-based observer can be used to estimate the unknown disturbances.
Let $\hat{\x}$ and $\hat{\bmd}$ be the estimation of $\x$ and $\bmd$. The estimation errors are defined as $\tilde \x=\x-\hat\x$ and $\tilde{\bmd}=\bmd-\hat{\bmd}$.
The RISE-based disturbance observer can be designed as
\begin{align}
  \label{equ:obs}
  \dot{\hat{\x}} &=  \bar f(\x)+\bar g(\x)\bm u+\hat{\bmd}+\alpha\tilde\x, \\
  \label{equ:obs1}
  \dot{\hat{\bmd}} &=  \gamma(\dot{\tilde{\x}}+\alpha\tilde{\x})+\tilde\x+\beta{\rm sign}(\tilde\x),
\end{align}
where $\alpha, \beta$ and $\gamma$ are positive control gains. The initial estimation states are set as $\hat\x(0)=\x(0)$ and $\bmd(0)=0$.

\emph{Lamma 2:} Consider the uncertain system (\ref{equ:sys2}) with a continuously differentiable function $\bmd$ that satisfies Assumption \ref{assumption:bounded_continuity}. 
If the control gains $\alpha, \beta$ and $\gamma$ in  (\ref{equ:obs}) and (\ref{equ:obs1}) are chosen as $\beta>\delta_2+\delta_3/\max\{1,\alpha-\gamma\}$, then the estimation error $\tilde{\bmd}$ is exponentially convergent in the sense that $\|\tilde{\bmd}(t)\|\leq\delta_1 e^{-\lambda_V t/2}$, $\forall t\geq0$, with $\lambda_V=\min\{\alpha-1,\gamma\}$.
\begin{proof}
Given the definition of estimation errors, taking the time derivative of $\tilde \x$ and $\tilde{\bmd}$ and substituting (\ref{equ:sys2}), (\ref{equ:obs}) and (\ref{equ:obs1}) yields
\begin{align}
  \label{equ:tildex}
  \dot{\tilde{\x}} &= \tilde{\bmd}-\alpha\tilde\x, \\
  \label{equ:tilded}
  \dot{\tilde{\bmd}} &= \dot{\bmd}-\gamma\tilde{\bmd}-\tilde\x-\beta{\rm sign}(\tilde\x).
\end{align}
To facilitate the construction of a candidate Lyapunov function for analyzing the stability and convergence properties of $\tilde{\bmd}$, the following auxiliary variables are introduced
\begin{align}
  \label{equ:auxv1}
  \dot{P}&= -\kappa P-L, \\
  \label{equ:auxv2}
  L &= \tilde{\bmd}(-\beta\sign(\tilde\x)+\dot{\bmd}),
\end{align}
where $\kappa\in\reals^+$ is an auxiliary constant. The initial value of $P$ is set as $P(0)=0$.
Solving the ordinary differential equation (\ref{equ:auxv1}) yields
\begin{align}
  \label{equ:P1}
  P(t) &= -e^{-\kappa t}\int_0^t\tilde{\bmd}(-\beta\sign(\tilde\x)+\dot{\bmd}) e^{-\kappa \tau}\dtau.
\end{align}
Then, substituting (\ref{equ:tildex}) into (\ref{equ:P1}), one can obtain
\begin{align}
  \!\!P(t) \!=\!& -\!\!e^{-\kappa t}\!\int_0^t\!(\dot{\tilde{\x}}(\tau)\!+\!\alpha\tilde{\x}(\tau)\!)(-\!\beta\sign(\tilde\x(\tau))\!+\!\dot{\bmd}(\tau)) e^{\kappa \tau}\dtau\nonumber\\
    =&(\tilde\x(0)\dot{\bmd}(0)\!-\!\beta\|\tilde\x(0)\|_1\!)e^{\!-\!\kappa t}\!+\!\beta\|\tilde\x(t)\|_1\!-\!\tilde\x(t)\dot{\bmd}(t)\nonumber\\
    &\!+\!\!\!\int_0^t\!\!\!(\!(\alpha\!-\!\!\kappa)(\beta\|\tilde\x(\tau)\|\!_1\!\!-\!\tilde\x(\tau)\dot{\bmd}(\tau)\!)\!\!+\!\!\tilde\x(\tau)\ddot{\bmd}(\tau)\!)
    e^{\kappa (\!\tau\!-\!t)}\!\dtau. \nonumber
\end{align}
Based on Assumption \ref{assumption:bounded_continuity}, it can be found that
\begin{align}
  \label{equ:P3}
  P(t) \geq\int_0^t((\alpha-\kappa)(\beta-\delta_2)-\delta_3))\|\tilde\x(\tau)\|_1e^{\kappa (\tau-t)}\dtau.
\end{align}
Therefore, if the control gains satisfy $\alpha>\kappa$ and $\beta>\delta_2+\delta_3/(\alpha-\kappa)$, $P(t)$ is nonnegative.
To simplify parameter selection, we choose $\kappa=\min\{\alpha-1,\gamma\}$.
Consider the Lyapunov function as follows:
\begin{align}
    \label{equ:V}
    V=\frac{1}{2}\tilde\x^\top\tilde\x+\frac{1}{2}\tilde{\bmd}^\top\tilde{\bmd}+P.
\end{align}
Taking the time derivative of (\ref{equ:V}) and substituting (\ref{equ:tildex}), (\ref{equ:tilded}) and (\ref{equ:auxv1}) yields
\begin{align}
    \label{equ:Vdot}
    \dot V=&\tilde\x^\top\dot{\tilde\x}+\tilde{\bmd}^\top\dot{\tilde{\bmd}}-\lambda P-L\nonumber\\
    =&-\alpha\|\tilde\x\|^2-\gamma\|\tilde{\bmd}\|^2-\kappa P\leq-\lambda_V V.
\end{align}
where $\lambda_V=\min\{\alpha,\gamma,\kappa\}=\min\{\alpha-1,\gamma\}$. From (\ref{equ:Vdot}), it follows that
\begin{align}
    \label{equ:Vdot1}
 &V(t)\leq V(0)e^{-\lambda_Vt},\nonumber\\
 \Rightarrow &\tilde\bmd\leq\sqrt{\tilde\x(0)^2+\tilde\bmd(0)^2+P(0)}e^{-\lambda_Vt/2}
\end{align}
Given that the initial values of $\tilde\x$ and $P$ are both set to zero, the above inequality simplifies to
\begin{align}
    \label{equ:dtilde}
    \|\tilde{\bmd}(t)\|\leq\|\tilde{\bmd}(0)\|e^{-\lambda_V t/2}.
\end{align}
According to assumption \ref{assumption:bounded_continuity}, it is assumed that $\|\bmd\| \leq \delta_1$.
If the initial value of $\hat\bmd(0)$ is set as zero, it follows that $\|\tilde\bmd(0)\|\leq\delta_1$.
Consequently, the inequality (\ref{equ:dtilde}) can be rewritten as
\begin{align}
    \label{equ:dtilde1}
    \|\tilde{\bmd}(t)\|\leq\delta_1e^{-\lambda_V t/2}.
\end{align}
This result shows that the estimation error $\|\tilde{\bmd}(t)\|$ decays exponentially over time.
\end{proof}
\begin{remark}
In some studies on robust output feedback control, observers are used to estimate system states and disturbances, with the estimation error proven to converge to zero \cite{8970337,9447165,9647024,9788485,10666861}.
Inspired by these works, this paper designs a RISE-based observer for robust safety critical control to observe and compensate for disturbances,
where $\hat\x$ can be regarded as an auxiliary variable.
\end{remark}
%

\section{Problem Formulation}\label{sec:c}\
In \cite{parwana2023feasible}, the paper discusses the feasibility issue of the QP problem with multiple CBFs for system (\ref{equ:sys2}) in the absence of disturbances.
Under such conditions, for system (\ref{equ:sys2}) with $N$ $r_i$th differentiable HOCBFs $h_i(\x)$, along with their associated sets $\bm S_i^{k-1}, i\in[r_i]$, the safety set $\bm S_{\rm H}$ is defined as $\bm S_{H}=\bigcap_{i = 1}^{N}\bigcap_{k = 0}^{r_i\!-\!1}\bm S_i^k$.
According to Lemma 1, any Lipschitz continuous controller satisfying the condition of
\begin{align}
  \bm u\in{\bm U}_{{\rm H}i}(\x)=\{&\bm u: \Gamma_i(\x,\bm u)\geq-\bm K_i^\top\bm\eta_i(\x)\},\nonumber
\end{align}
will ensure the forward invariance of the set $\bigcap_{k = 0}^{r_i\!-\!1}\bm S_i^k$.
Consequently, the forward invariance of the safety set $\bm S_{H}$ can be achieved using the following controller:
\begin{subequations}
 \label{equ:cbfqp}
\begin{align}
\label{equ:goal}
  &\bm u^*(\x) = \arg\min_{\bm u\in\bm U}\|\bm u-\bm u_{\rm ref}\|^2,\\
  \label{equ:constraints1}
  &{\rm{s.t.}}\quad\Gamma_i(\x,\bm u)\geq-\bm K_i^\top\bm\eta_i(\x), i\in[N],
\end{align}
\end{subequations}
where $\bm u_{\rm ref}$ is the reference input.
Given the constraints of the above optimization problem, define the feasible space as
\begin{align}
\bm U_{\rm F}(\x) = \bigcap_{i=1}^N {\bm U}_{{\rm H}i}(\x)\bigcap \bm U.
\end{align}
For the control-affine system (\ref{equ:sys2}), each $\bar {\bm U}_{{\rm H}i}(\x)$, $i\in[N]$ is affine in $\bm u$ and represents a half-space.
Together with the input constraints $\bm U$, the intersection of these constraints is a convex polytope.
For brevity, the feasible space is denoted as the following form
\begin{align*}
\bm U_{\rm F}(\x) = \{\bm u\in \reals^m|\bm A(\x)\bm u\leq\bm b(\x)\},
\end{align*}
where
\[
{\bm{A}}(\x)\!=\!\!
\begin{bmatrix}
\bm A_m \\
\!-L_{\bar g}\!L_{\bar f}^{r_1\!-\!1}\!h_1(\x)\! \\
\cdots \\
\!-L_{\bar g}\!L_{\bar f}^{r_N\!-\!1}\!h_N(\x)\!
\end{bmatrix}\!,
{\bm{b}}(\x)\!=\!\!
\begin{bmatrix}
\bm b_m \\
\!\bm K_1^\top\bm\eta_1(\x)\!+\!L_{\bar f}^{r_1}h(\x)\! \\
\cdots \\
\!\bm K_N^\top\bm\eta_N(\x)\!+\!L_{\bar f}^{r_N}h(\x)\!
\end{bmatrix}\!.
\]
Besides, each row of $\bm A(\x)$ is denoted by $\bm a_i$ and each element of $\bm b(\x)$ is denoted by $b_i$ where $i\in[2m+N]$.
If $\bm U_{\rm F}(\x)$ is empty, quadratic program (\ref{equ:cbfqp}) becomes infeasible.
To address this, the VCBF is introduced, where the volume of $\bm U_{\rm F}(\x)$ is defined as a new CBF, $V_v(\x)$.
By ensuring that $V_v(\x)$ always remains positive, the non-emptiness of $\bm U_{\rm F}(\x)$ is guaranteed.
It is important to note that directly computing the volume of a polytope is of much difficulty.
Instead, it can be approximated using the inscribed ellipsoid of the polytope. The largest inscribed ellipsoid of a polytope can be obtained by solving the following convex optimization problem:
\begin{align}
  \label{equ:eillipsoidqp1}
  &\min\quad\log\det \bm B^{-1}, \\
  &{\rm{s.t.}}\quad\|\bm B \bm a_i^\top\|+\bm a_i\bm d\leq b_i, i\in[2m+N].\nonumber
\end{align}
where $\bm B\in\reals^{m\times m}$ defines the shape and size of the ellipsoid, and $\bm d\in\reals^m$ represents its center.
Thus, the volume barrier function $V_v(\x)$ can be calculated as follows:
\begin{align}
  \label{equ:Vv}
  V_v(\x)=\frac{(\pi^{m/2})}{\Gamma(\frac{m}{2}+1)}\det \bm B-\xi,
\end{align}
where $\xi$ is a small positive quantity.
Define the compatible state space as $S_{V_v}$ satisfying $\{\x\in\bar{\bm X}|V_v(\x)\geq0\}$.
Any Lipschitz continuous controller $\bm u$ satisfied the condition of
\begin{align}
\label{equ:limitD}
{u\!\in\! \bm U_{\rm C}}(\x)\!=\!\{\bm u:\!L_{\bar f}V_v(\x)\!+\!L_{\bar g}V_v(\x)\bm u\!\!\geq\!\!-\lambda_{V_v}V_v(\x)\},
\end{align}
will ensure the forward invariance of the set $S_{V_v}$.
Consequently, to enforce the compatibility of CBF and input constraints and ensure that the system always remains within safe operational limits, the controller can be obtained by solving the following VCBF-QP:
\begin{subequations}
 \label{equ:qp}
\begin{align}
\label{equ:qp1}
  [\bm u^*\!(&\x), \delta^*\!(\x)]^\top \!=\! \arg\min\|\bm u\!-\!\bm u_{\rm ref}\|^2\!+\!M\delta^2, \\
  \label{equ:constraints1}
  {\rm{s.t.}}~&\bm A(\x)\bm u\leq \bm b(\x),\\
  \label{equ:constraints2}
  &L\!_{\bar f}\!V\!_v(\x)\!+\!L_{\bar g}\!V_v(\x)\bm u\!\geq\!\!-\lambda_{V_v}V_v(\x)\!+\!\delta,
\end{align}
\end{subequations}
where $\delta$ is a slack variable.

The aforementioned results do not take disturbances into consideration.
When disturbances are present, the form of function $\Gamma_i(\x,\bm u)$ is
\begin{align}
  \label{equ:Gamma}
  \Gamma_i(\x,\bm u)\!=\!L_{\bar f}^{r_i}h_i(\x)\!+\!L_{\bar g}\!L_{\bar f}^{r_i\!-\!1}\!h(\x)\bm u+\!L_{I}\!L_{\bar f}^{r_i\!-\!1}\!h(\x)\bmd,
\end{align}
which includes the unknown term $L_{I}\!L_{\bar f}^{r\!-\!1}\!h(\x)\bmd$.
To handle this challenge, it is recommended to estimate the disturbance, then quantify its corresponding estimation error bounds, and incorporate the impact of these results into the CBF constraints.
However, in scenarios involving multiple CBF constraints, a conservative estimation error bound can reduce the feasible space, potentially leading to infeasibility.
Therefore, ensuring the feasibility of multiple CBF constraints under disturbances requires addressing the following key issues:
\begin{problem}
\label{problem1}
 For HOCBF $h(\x)$, find an adaptive lower bounder $\Lambda(\x)$ for the unknown term $L_{I}L_{\bar f}^{r-1}h(\x)\bmd$, such that the difference between $\Lambda(\x)$ and $L_{I}L_{\bar f}^{r-1}h(\x)\bmd$ converges to zero as $t$ tends to infinity.
\end{problem}

The original VCBF, which is developed for disturbance-free scenarios, may become ineffective in the presence of disturbances.
In this case, it is important to propose a method that ensures the feasible space $\bm U_{\rm F}(\x)$ consistently exists, such that both the safety requirements defined by CBFs and the input constraints can be satisfied simultaneously.
\begin{problem}\label{problem2}
Consider the control-affine system (\ref{equ:sys2}) with unknown disturbance subject to $N$ CBF constraints $h_i(\x)$.
Given that the initial conditions satisfy $\x(0)\in\bm S_{H}$, and $\bm U_{\rm F}(\x(0))\neq\emptyset$, design a controller to ensure that $\x\in\bm S_{H}$, $\bm U_{\rm F}(\x)\neq\emptyset$ hold for all time.
\end{problem}
\section{Main Results}\label{sec:d}
In this section, a RISE-based disturbance observer is developed to estimate disturbances.
The estimated disturbance, along with a time-varying lower bound on the estimation error, is utilized to replace the unknown disturbance, providing a solution to Problem \ref{problem1}.
Furthermore, to address Problem \ref{problem2}, a novel DOB-VCBF-based approach is proposed to ensure the existence of feasible solutions for system (\ref{equ:sys2}) under multiple CBF and input constraints, even in the presence of disturbances.
\subsection{Robust Safety With Disturbance Estimation}\label{sec:RISE}
In this subsection, the DOB-CBF-QP-based safe controller is proposed for the HOCBF.
Based on the results of Lamma 2, the time-varying lower bound can be designed as
\begin{align}
\label{equ:lambdai}
\Lambda(\x)\!\!=\!\!L_{I}L_{\bar f}^{r-1}h(\x)\hat\bmd\!-\|L_{I}L_{\bar f}^{r-1}h(\x)\|\delta_1 e^{-\lambda_V t/2}.
\end{align}
In this expression, the first term compensates for unknown disturbances, while the second term mitigates the impact of estimation errors on the HOCBF.
Since the RISE-based disturbance observer ensures the convergence of the estimation error, satisfying $\|\tilde{\bmd}(t)\|\leq\delta_1 e^{-\lambda_V t/2}$, the second term gradually vanishes over time.
As a result, the difference between $\Lambda(\x)$ and $L_{I}L_{\bar f}^{r-1}h(\x)\bmd$ converges to zero as $t\to\infty$.
Therefore, under the result of Lamma 2, $\Lambda(\x)$ serves as a valid solution to Problem\ref{problem1}.
Recall the HOCBF $h(\x)$ defined in Definition 2 and the RISE-based disturbance observer defined in (\ref{equ:obs}) and (\ref{equ:obs1}).
Define the candidate CBF as
\begin{align}
  \label{equ:barh}
  \bar h(\x) = \beta_1 \phi^{r-1}(\x)-V,
\end{align}
where $\beta_1$ is a parameter to be determined, $\phi^{r-1}(\x)$ and $V$ is defined in (\ref{equ:highorder_phi}) and (\ref{equ:V}), respectively.
\begin{theorem}
\label{theorem1}
Consider system (\ref{equ:sys2}) and the disturbance estimation law given in (\ref{equ:obs}) and (\ref{equ:obs1}).
Suppose that Assumption \ref{assumption:bounded_continuity} holds and $\phi^{i-1}(\x(0))>0, i\in[r]$.
If the parameters are selected such that $\beta>\delta_2+\delta_3/\max\{1,\alpha-\gamma\}$, $\beta_1\geq\delta_1^2/2\phi^{r-1}(\x(0))$, $\lambda_V\geq\lambda_r$,
then any Lipschitz continuous controller
\begin{align}
  \label{equ:khocbfdis}
  \bm u\in{\bar{\bm U}}_{{\rm H}}(\x)=\{&\bm u:L_{\bar f}^{r}h(\x)+L_{\bar g}L_{\bar f}^{r-1}h(\x)\bm u\nonumber\\
  &+\Lambda(\x)+\bm K^\top\bm\eta(\x)\geq0\}.
\end{align}
with $\bm K$ and $\bm \eta(\x)$ defined below (\ref{equ:highorder_limit_U1}), will guarantee $h(\x(t))\geq0$ for all $t\geq0$.
\end{theorem}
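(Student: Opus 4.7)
The plan is to show that the scalar indicator $\bar h(\x(t))=\beta_1\phi^{r-1}(\x(t))-V(t)$, defined in (\ref{equ:barh}), is rendered forward-invariant in $\{\bar h\geq 0\}$ by any controller in $\bar{\bm U}_{{\rm H}}(\x)$; once this is established, nonnegativity of $V$ immediately yields $\phi^{r-1}(\x)\geq 0$, and the standard HOCBF cascade then propagates this down to $h(\x)\geq 0$. I would first dispose of the initial condition: using the observer initializations $\hat{\x}(0)=\x(0)$, $\hat{\bmd}(0)=0$ together with $P(0)=0$ and Assumption~\ref{assumption:bounded_continuity}, one gets $V(0)=\tfrac{1}{2}\|\bmd(0)\|^2\leq \delta_1^2/2$, and the hypothesis $\beta_1\geq \delta_1^2/(2\phi^{r-1}(\x(0)))$ gives $\bar h(\x(0))\geq 0$.

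Next I would verify that $\Lambda(\x)$ is indeed a valid lower bound for the hidden disturbance term entering $\dot\phi^{r-1}$ along the \emph{actual} dynamics. Writing $\bmd=\hat{\bmd}+\tilde{\bmd}$ and using Cauchy--Schwarz with the exponential bound $\|\tilde{\bmd}(t)\|\leq \delta_1 e^{-\lambda_V t/2}$ from Lemma~2,
\eqn{
L_{I}L_{\bar f}^{r-1}h(\x)\bmd \geq L_{I}L_{\bar f}^{r-1}h(\x)\hat{\bmd}-\|L_{I}L_{\bar f}^{r-1}h(\x)\|\,\delta_1 e^{-\lambda_V t/2}=\Lambda(\x).
}
Substituting the definition of $\bar{\bm U}_{{\rm H}}(\x)$ into the expression (\ref{equ:Gamma}) for $\Gamma(\x,\bm u)$ along the perturbed system, and recalling that with linear $\alpha^k$ one has $\dot\phi^{r-1}=\Gamma(\x,\bm u)+L_I L_{\bar f}^{r-1}h(\x)\bmd+\bm K^\top\bm\eta(\x)-k^r\phi^{r-1}$, the two inequalities combine to give $\dot\phi^{r-1}(\x(t))\geq -k^r\phi^{r-1}(\x(t))$ along every trajectory. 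Coupled with $\dot V\leq -\lambda_V V$ from Lemma~2, this yields
\eqn{
\dot{\bar h}\geq -\beta_1 k^r\phi^{r-1}+\lambda_V V.
}

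To close the argument, I would apply a Nagumo-style invariance check on the boundary $\{\bar h=0\}$, where $V=\beta_1\phi^{r-1}$; substituting, $\dot{\bar h}\geq \beta_1(\lambda_V-k^r)\phi^{r-1}\geq 0$, where the sign requires exactly the hypothesis $\lambda_V\geq \lambda_r$ (with $\lambda_r$ identified as the relevant coefficient/rate $k^r$). Combined with $\bar h(\x(0))\geq 0$ this gives $\bar h(\x(t))\geq 0$ for all $t\geq 0$, hence $\phi^{r-1}(\x(t))\geq V(t)/\beta_1\geq 0$. Finally, from $\phi^{r-1}\geq 0$ and the linear recursion $\phi^{k}=\dot\phi^{k-1}+k^{r-k}\phi^{k-1}$, a Gr\"onwall-type argument applied $r-1$ times together with the hypothesis $\phi^{i-1}(\x(0))>0$ for $i\in[r]$ cascades nonnegativity down to $\phi^0=h$, giving $h(\x(t))\geq 0$ for all $t\geq 0$.

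I expect the main obstacle to be the derivative computation leading to $\dot\phi^{r-1}\geq -k^r\phi^{r-1}$: one must verify that the disturbance contribution to $\tfrac{d}{dt}\phi^{r-1}(\x(t))$ collapses to the single term $L_I L_{\bar f}^{r-1}h(\x)\bmd$ used in (\ref{equ:Gamma}), which relies implicitly on a relative-degree-like structure that makes lower-order Lie derivatives along the disturbance direction vanish. The cascade in the final step is then routine provided this same structure propagates through $\phi^{r-2},\ldots,\phi^0$, since each $\phi^{k}$ inherits a pure-nominal derivative relation that feeds Gr\"onwall.
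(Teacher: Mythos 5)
Your proposal matches the paper's proof essentially step for step: the same candidate function $\bar h=\beta_1\phi^{r-1}-V$, the same use of Lemma~2's pointwise bound $\|\tilde{\bmd}(t)\|\leq\delta_1 e^{-\lambda_V t/2}$ to justify $L_IL_{\bar f}^{r-1}h(\x)\bmd\geq\Lambda(\x)$, the same initial-condition check via $V(0)=\tfrac{1}{2}\|\tilde{\bmd}(0)\|^2\leq\delta_1^2/2$, and the same cascade from $\phi^{r-1}\geq 0$ down to $h\geq 0$. The only cosmetic difference is that you verify invariance by a Nagumo check on $\{\bar h=0\}$, whereas the paper rearranges the identical inequality into $\dot{\bar h}\geq-\lambda_r\bar h$ and applies the comparison lemma; the two closings are equivalent.
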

\begin{proof}
Considering the definition of ${\phi}^{r-1}(\x)$ and taking the time derivative of it yields
\begin{align}
\label{equ:dotphi}
\dot{\phi}^{r-1}(\x)=&L_{\bar f}^{r}h(\x)\!+\!L_{\bar g}\!L_{\bar f}^{r\!-\!1}\!h(\x)\bm u\!+\!L_{I}\!L_{\bar f}^{r\!-\!1}\!h^{r-1}(\x)\bmd\!\nonumber\\&+\!\bm K^\top\!\bm \eta(\x)-\lambda_r{\phi}^{r-1}(\x).
\end{align}
Then, substituting (\ref{equ:Vdot}) and (\ref{equ:dotphi}) into time derivative of (\ref{equ:barh}), one can obtain
\begin{align}
  \label{equ:dotbarh}
  \dot{\bar h}(\x) =& \beta_1 \dot{\phi}^{r-1}(\x)-\dot V\nonumber\\
\geq&\beta_1(L_{\bar f}^{r}h(\x)+L_{\bar g}L_{\bar f}^{r-1}\!h(\x)\bm u+L_{I}L_{\bar f}^{r-1}h(\x)\bmd\nonumber\\
&+\bm K^\top\!\bm \eta(\x)-\lambda_r{\phi}^{r-1}(\x))+\lambda_VV
\end{align}
Based on the definition of $\tilde{\bmd}$, the inequality (\ref{equ:dotbarh}) can be modified as
\begin{align}
  \label{equ:dotbarh1}
\dot{\bar h}(\x)\geq&\beta_1(L_{\bar f}^{r}h(\x)+L_{\bar g}L_{\bar f}^{r-1}\!h(\x)\bm u+L_{I}L_{\bar f}^{r-1}h(\x)\hat\bmd\nonumber\\
&-\|L_{I}L_{\bar f}^{r-1}h(\x)\|\|\tilde\bmd\|+\bm K^\top\!\bm \eta(\x))\nonumber\\
&-\lambda_r(\beta_1\phi^{r-1}(\x)-V)+(\lambda_V-\lambda_r)V,
\end{align}
Substituting (\ref{equ:dtilde1}) and (\ref{equ:lambdai}) into (\ref{equ:dotbarh1}) yields
\begin{align}
  \label{equ:dotbarh2}
  \dot{\bar h}(\x)\geq&\beta_1(L_{\bar f}^{r}h(\x)+L_{\bar g}L_{\bar f}^{r-1}\!h(\x)\bm u+\Lambda(\x)+\bm K^\top\!\bm \eta(\x))\nonumber\\
&-\lambda_r(\beta_1\phi^{r-1}(\x)-V)+(\lambda_V-\lambda_r)V\nonumber\\
\geq&-\lambda_r(\beta_1\phi^{r-1}(\x)-V)=-\lambda_r\bar h(\x),
\end{align}
Since $\beta_1\geq\delta_1^2/2\phi^{r-1}(\x(0))\geq\|\tilde{\bmd}\|^2/2\phi^{r-1}(\x(0))$ implies $\bar h(\x(0))\geq0$, (\ref{equ:dotbarh2}) indicates $\bar h(\x(t))\geq0$ for all $t\geq0$, such that $\phi^{r-1}(\x(t))\geq0$.
Because $\phi^{i-1}(\x(0))\geq0,i\in[r]$, according to Lamma 1, one has $h(\x(0))\geq0$.
\end{proof}

Building on the results of Theorem \ref{theorem1}, consider the control-affine system (\ref{equ:sys2}) with unknown disturbances, subject to $N$ HOCBF constraints $h_i(\x)$.
Refer to (\ref{equ:lambdai}), the time-varying lower bound for each $h_i(\x)$ can be defined as
\begin{align}
\Lambda_i(\x)\!\!=\!\!L_{I}\!L_{\bar f}^{r_i\!-\!1}h_i(\x)\hat\bmd\!-\!\|L_{I}\!L_{\bar f}^{r_i\!-\!1}h_i(\x)\|\delta_1 e^{\!-\!\lambda_V t/2},i\in[N].\nonumber
\end{align}
Then, the feasible space is denoted in the following form
\begin{align}
\bar{\bm U}_{\rm F}(\x) = \{\bm u\in \reals^m|\bm A(\x)\bm u\leq\bm b(\x)+\Delta\bm b(\x)\},\nonumber
\end{align}
where $\Delta \bm b(\x)=[\bm 0_{1\times m},\Lambda_1(\x),\dots,\Lambda_N(\x)]^\top$.
If $\bar{\bm U}_{\rm F}$ is nonempty, the robust safe controller can be obtained by solving the following DOB-CBF-QP:
\begin{subequations}
 \label{equ:cbfqp1}
\begin{align}
\label{equ:qp1}
  &\bm u^*(\x) = \arg\min\|\bm u-\bm u_{\rm ref}\|^2, \\
  \label{equ:constraints1}
  &{\rm{s.t.}}\quad\bm A(\x)\bm u\leq\bm b(\x)+\Delta\bm b(\x).
\end{align}
\end{subequations}
\subsection{DOB-VCBF-based Robust Safety Critical Control Framework}\label{sec:fea}
The constraints of the QP problem (\ref{equ:cbfqp1}) remain a convex polyhedron, which allows (\ref{equ:eillipsoidqp1}) to be used for constructing a VCBF in the form of (\ref{equ:Vv}).
However, under the influence of disturbances, the controller that satisfies (\ref{equ:limitD}) may no longer ensure the forward invariance of $\bm S_{V_v}$.
Motivated by (\ref{equ:lambdai}), the following lower bound is designed to cover the influence of the unknown disturbances
\begin{align}
\label{equ:lambdaVv}
\Lambda_{V_v}(\x)=L_{I}V_v(\x)\hat\bmd-\|L_{I}V_v(\x)\|\delta_1 e^{-\lambda_V t/2}.
\end{align}
Similarly, define the candidate CBF as
\begin{align}
\label{equ:barhV}
  \bar h_V(\x) = \beta_2 V_v(\x)-V,
\end{align}
where $\beta_2$ is a parameter to be determined. $V_v(\x)$ and $V$ is defined in (\ref{equ:Vv}) and (\ref{equ:V}), respectively.
\begin{theorem}
\label{theorem2}
Consider system (\ref{equ:sys2}) and the VCBF $V_v(\x)$ defined in (\ref{equ:eillipsoidqp1}).
Suppose that $V_v(\x(0))>0$.
If the parameters are selected such that $\beta>\delta_2+\delta_3/\max\{1,\alpha-\gamma\}$, $\beta_2\geq\delta_1^2/2V_v(\x(0))$, $\lambda_V\geq\lambda_{V_v}$.
Then any Lipschitz continuous controller
\begin{align}
  \label{equ:khocbfdis1}
  \bm u\in{\bar{\bm U}}_{{\rm C}}(\x)=\{&\bm u:L_{\bar f}V_v(\x)+L_{\bar g}V_v(\x)\bm u\nonumber\\
  &+\Lambda_{V_v}(\x)+\lambda_{V_v}V_v(\x)\geq0\}.
\end{align}
will guarantee $V_v(\x(t))\geq0$ for all $t\geq0$.
\end{theorem}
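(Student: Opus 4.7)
The plan is to mirror the argument used for Theorem 1, but with $V_v$ playing the role of $\phi^{r-1}$ and $\lambda_{V_v}$ playing the role of $\lambda_r$. Specifically, I would introduce the candidate barrier $\bar h_V(\x)=\beta_2 V_v(\x)-V$ already defined in (\ref{equ:barhV}) and show that under the hypotheses, $\bar h_V(\x(t))\geq 0$ for all $t\geq 0$, which immediately yields $V_v(\x(t))\geq V(t)/\beta_2\geq 0$ since $V$ is nonnegative by construction.

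First I would differentiate $V_v(\x)$ along trajectories of (\ref{equ:sys2}). Using the identity $\dot V_v=L_{\bar f}V_v(\x)+L_{\bar g}V_v(\x)\bm u+L_IV_v(\x)\bmd$ and the controller admissibility condition (\ref{equ:khocbfdis1}), I would split $\bmd=\hat\bmd+\tilde\bmd$ and bound the estimation-error contribution via Cauchy--Schwarz as $|L_IV_v(\x)\tilde\bmd|\leq\|L_IV_v(\x)\|\|\tilde\bmd\|$. Substituting the exponential bound $\|\tilde\bmd(t)\|\leq\delta_1 e^{-\lambda_V t/2}$ from Lemma 2 and the definition (\ref{equ:lambdaVv}) of $\Lambda_{V_v}$, the admissibility of $\bm u$ collapses the $\hat\bmd$ and $e^{-\lambda_V t/2}$ terms, giving $\dot V_v(\x)\geq -\lambda_{V_v}V_v(\x)$.

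Combining this with $\dot V\leq -\lambda_V V$ from (\ref{equ:Vdot}) and using $\lambda_V\geq\lambda_{V_v}$, the time derivative of $\bar h_V$ satisfies
\begin{align}
\dot{\bar h}_V(\x)=\beta_2\dot V_v(\x)-\dot V\geq -\lambda_{V_v}\beta_2 V_v(\x)+\lambda_V V\geq -\lambda_{V_v}\bar h_V(\x),\nonumber
\end{align}
so the comparison lemma delivers $\bar h_V(\x(t))\geq \bar h_V(\x(0))e^{-\lambda_{V_v}t}$. It remains to verify $\bar h_V(\x(0))\geq 0$. Since $\hat{\x}(0)=\x(0)$ and $\hat\bmd(0)=\bm 0$ give $\tilde\x(0)=\bm 0$, $P(0)=0$, and $\|\tilde\bmd(0)\|\leq\delta_1$, the definition of $V$ in (\ref{equ:V}) yields $V(0)\leq\delta_1^2/2$. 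The hypothesis $\beta_2\geq\delta_1^2/(2V_v(\x(0)))$ then gives $\beta_2 V_v(\x(0))\geq V(0)$, i.e.\ $\bar h_V(\x(0))\geq 0$.

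The main obstacle I anticipate is conceptually mild but technically delicate: verifying that $V_v(\x)$ is sufficiently smooth to apply Lie-derivative calculus, since $V_v$ is defined implicitly through the log-determinant optimization (\ref{equ:eillipsoidqp1}) whose solution $\bm B(\x)$ depends on $\x$ through the polytope data $(\bm A(\x),\bm b(\x))$. I would handle this by appealing to the differentiability of the inscribed-ellipsoid map under a standard LICQ/strict-complementarity assumption on the optimization (as used in \cite{parwana2023feasible}); once $L_{\bar f}V_v$, $L_{\bar g}V_v$ and $L_I V_v$ are well defined, the remainder of the argument is a direct transcription of the proof of Theorem \ref{theorem1}.
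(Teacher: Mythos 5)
Your proposal is correct and follows essentially the same route as the paper: both use the candidate barrier $\bar h_V(\x)=\beta_2 V_v(\x)-V$, bound the estimation-error term by $\|L_I V_v(\x)\|\delta_1 e^{-\lambda_V t/2}$ so that the admissibility condition (\ref{equ:khocbfdis1}) yields $\dot{\bar h}_V\geq-\lambda_{V_v}\bar h_V$ via $\lambda_V\geq\lambda_{V_v}$ and $V\geq0$, and then check $\bar h_V(\x(0))\geq0$ from the choice of $\beta_2$. Your explicit verification that $V(0)\leq\delta_1^2/2$ and your remark on the differentiability of $V_v$ are slightly more careful than the paper's treatment but do not constitute a different argument.
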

\begin{proof}
Taking the time derivative of (\ref{equ:barhV}) yields
\begin{align}
\label{equ:dotbarhV}
  \dot{\bar h}_V(\x) \geq& \beta_2 (L_{\bar f}V_v(\x)\!+\!L_{\bar g}V_v(\x)\bm u\!+\!L_{I}V_v(\x)\bmd)+\lambda_VV\nonumber\\
  \geq& \beta_2 (L_{\bar f}V_v(\x)\!+\!L_{\bar g}V_v(\x)\bm u\!+\!\Lambda_{V_v})\nonumber\\
  &+\lambda_{V_v}V+(\lambda_V-\lambda_{V_v})V,
\end{align}
Given the controller satisfying (\ref{equ:khocbfdis1}) and the parameter $\lambda_V\geq\lambda_{V_v}$, (\ref{equ:dotbarhV}) can be modified as
\begin{align}
\label{equ:dotbarhV1}
  \dot{\bar h}_V(\x) \geq&-\lambda_{V_v}(\beta_2 V_v(\x)-V)=-\lambda_{V_v}\bar{h}_V(\x)
\end{align}
Since $\beta_2\geq\delta_1^2/2V_v(\x(0))\geq\|\tilde{\bmd}\|^2/2V_v(\x(0))$ implies $\bar h_V(\x(0))\geq0$, (\ref{equ:dotbarhV1}) indicates $\bar h_V(\x(t))\geq0$ for all $t\geq0$.
In this way, $V_v(\x(t))\geq V/\beta_2\geq0$, which means that the CBF and input constraints is persistent compatible.
\end{proof}

However, the controller designed solely based on (\ref{equ:khocbfdis1}) fails to guarantee that $\bm u\in\bm U_{\rm F}(\x)$, which implies that the safety of the system trajectory might be violated.
Consequently, to enforce the compatibility of CBF and input constraints and ensure that the system remains within safe operational limits, the controller can be obtained by solving the following DOB-VCBF-QP:
\begin{subequations}
 \label{equ:qp}
\begin{align}
\label{equ:qp1}
  [\bm u^*(&\x), \delta^*(\x)]^\top =\! \arg\min\|\bm u\!-\!\bm u_{\rm ref}\|^2\!+\!M\delta^2, \\
  \label{equ:constraints1}
  {\rm{s.t.}}~&\bm A(\x)\bm u\leq \bm b(\x)+\Delta\bm b(\x),\\
  \label{equ:constraints2}
  &L\!_{\bar f}V\!_v(\x)\!+\!L_{\bar g}V_v(\x)\bm u\!+\!\Lambda_{V_v}(\x)\!\geq\!-\lambda_{V_v}V_v(\x)\!+\!\delta,
\end{align}
\end{subequations}
where $\delta$ is a slack variable that ensures the solvability of the QP as penalized by $M\gg 1$.
The relaxation is necessary because making that the feasible space non-empty is not equivalent to ensuring the safety of the system.
\begin{theorem}
\label{theorem3}
Suppose that $\bm u^*(\x)$ defined in (\ref{equ:qp}) is locally Lipschitz continuous.
Then, for any system trajectory $\x(t)$ starting from $\x(0)\in\bm S_H\bigcap\bm S_{V_v}$, if $\delta^*(\x(t))\equiv0$ then $\x(t)\in \bm S_H\bigcap\bm S_{V_v}$.
This guarantees that the system remains safe while ensuring mutual compatibility among the input constraints and multiple CBF constraints.
\end{theorem}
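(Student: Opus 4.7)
The plan is to decompose the claim into two forward-invariance arguments, one per set in the intersection $\bm S_H\cap\bm S_{V_v}$, and to show that each follows from a theorem already established earlier in the paper. The key observation is that the slack variable $\delta$ in the QP (\ref{equ:qp}) only appears in the VCBF constraint (\ref{equ:constraints2}); the hard constraint (\ref{equ:constraints1}) is imposed without relaxation. Therefore, along any trajectory generated by $\bm u^*(\x)$, the HOCBF conditions $\bm A(\x)\bm u^*(\x)\leq \bm b(\x)+\Delta\bm b(\x)$ are satisfied pointwise in time, and the VCBF condition (\ref{equ:constraints2}) is also satisfied exactly whenever $\delta^*(\x(t))\equiv0$.

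First I would establish $\x(t)\in\bm S_{V_v}$. Since $\delta^*\equiv0$, constraint (\ref{equ:constraints2}) reduces precisely to the membership condition defining $\bar{\bm U}_{\rm C}(\x)$ in (\ref{equ:khocbfdis1}). The Lipschitz continuity of $\bm u^*$ guarantees that the closed-loop trajectory is well-defined and that the pointwise inclusion $\bm u^*(\x(t))\in\bar{\bm U}_{\rm C}(\x(t))$ propagates the hypotheses of Theorem \ref{theorem2}. Since $\x(0)\in\bm S_{V_v}$ gives $V_v(\x(0))\geq0$ (and one may sharpen to $>0$ by standard boundary perturbation arguments or by the strict interior assumption used in Theorem \ref{theorem2}), Theorem \ref{theorem2} directly yields $V_v(\x(t))\geq0$ for all $t\geq0$, hence $\x(t)\in\bm S_{V_v}$.

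Next I would establish $\x(t)\in\bm S_H$. Recall that the rows of $\bm A(\x)\bm u\leq \bm b(\x)+\Delta\bm b(\x)$ corresponding to $i\in[N]$ are precisely the DOB-HOCBF conditions defining $\bar{\bm U}_{{\rm H}}(\x)$ in (\ref{equ:khocbfdis}) applied to each $h_i$ with its own relative degree $r_i$. Because $\x(0)\in\bm S_H=\bigcap_{i=1}^N\bigcap_{k=0}^{r_i-1}\bm S_i^k$, we have $\phi_i^{k-1}(\x(0))\geq0$ for every $i\in[N]$ and every $k\in[r_i]$, which is exactly the initial-condition hypothesis needed for Theorem \ref{theorem1}. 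Applying Theorem \ref{theorem1} individually to each $h_i$ under the stated parameter choices ($\beta>\delta_2+\delta_3/\max\{1,\alpha-\gamma\}$, $\beta_1$ chosen per $h_i$, and $\lambda_V\geq\lambda_{r_i}$) then gives $h_i(\x(t))\geq0$ for all $i$ and all $t\geq0$, so $\x(t)\in\bigcap_{i=1}^N\bm S_i^0\supseteq\bm S_H$ at every time. Combining with the previous step yields $\x(t)\in\bm S_H\cap\bm S_{V_v}$.

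The main obstacle I anticipate is book-keeping rather than analytical: verifying that the parameter conditions $\beta_1,\lambda_V,\beta_2$ required by Theorems \ref{theorem1} and \ref{theorem2} can be simultaneously satisfied across all $N$ HOCBFs and the VCBF (in particular, that a single DOB gain choice is compatible with every $\lambda_{r_i}$ and with $\lambda_{V_v}$), and noting that the candidate Lyapunov term $V$ used in both Theorems \ref{theorem1} and \ref{theorem2} is a single common function, so the bounds it generates need only be absorbed once into each $\beta_j$. A secondary subtlety is guaranteeing the trajectory is defined on $[0,\infty)$; this reduces to the Lipschitz assumption on $\bm u^*$ together with the forward invariance just established, since the feasibility of the QP on $\bm S_H\cap\bm S_{V_v}$ prevents finite-time blow-up of the control. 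No new analytical estimate is needed beyond what is already proved in Theorems \ref{theorem1} and \ref{theorem2}.
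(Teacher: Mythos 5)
Your proposal matches the paper's own proof in both structure and substance: the paper likewise splits the claim into forward invariance of $\bm S_{V_v}$ (observing that $\delta^*\equiv0$ turns constraint (\ref{equ:constraints2}) into exactly the condition (\ref{equ:khocbfdis1}) of Theorem \ref{theorem2}) and forward invariance of $\bm S_H$ (observing that the hard constraint (\ref{equ:constraints1}) coincides with the DOB-CBF-QP constraints of (\ref{equ:cbfqp1}), so Theorem \ref{theorem1} applies to each $h_i$). The only nit is that your closing inclusion should conclude $\x(t)\in\bm S_H$ directly --- the HOCBF cascade in Theorem \ref{theorem1} propagates all $\phi_i^{k}(\x(t))\geq0$, not merely $h_i(\x(t))\geq0$, so membership in $\bigcap_{i}\bm S_i^0$ can be sharpened to membership in $\bm S_H$ --- but this is precisely the level of detail at which the paper itself argues.
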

\begin{proof}
The solution of the closed-loop system is well-defined by the assumption that $\bm u^*(\x)$ is locally Lipschitz continuous.
When $\delta^*(\x(t))\equiv0$, the constraint (\ref{equ:constraints2}) can be rewritten as
\begin{align}
L\!_{\bar f}V\!_v(\x)\!+\!L_{\bar g}\!V_v(\x)\bm u\!+\!\Lambda_{V_v}(\x)\geq\!-\lambda_{V_v}V_v(\x), \nonumber
\end{align}
referring to Theorem \ref{theorem2}, it holds that $\x(t)\in\bm S_{V_v}$ for all $t\geq0$ if $\x(0)\in\bm S_{V_v}$.
The condition implies that the CBF and the input constraints are compatible and the feasible input space $\bm U_{\rm F}(\x)$ is nonempty.
Moreover, the constraint in (\ref{equ:constraints1}) is equivalent to the constraint conditions specified in the formula (\ref{equ:cbfqp1}).
This equivalence guarantees that $\bm u^*(\x)$ can ensure $\x(t) \in \bm S_H$ for all $t \geq 0$, if $\x(0) \in \bm S_H$. Therefore, the system's robust safety can be maintained even in the presence of disturbances.
By combining these results, $\bm u^*(\x)$ defined in (\ref{equ:qp}) can render $\x(t) \in \bm S_H\bigcap\bm S_{V_v}$ for all $t \geq 0$, if $\x(0) \in \bm S_H\bigcap\bm S_{V_v}$.
\end{proof}

\begin{figure}[t]
    \centering
    \begin{subfigure}[t]{0.8\linewidth} 
        \centering
        \includegraphics[width=\linewidth]{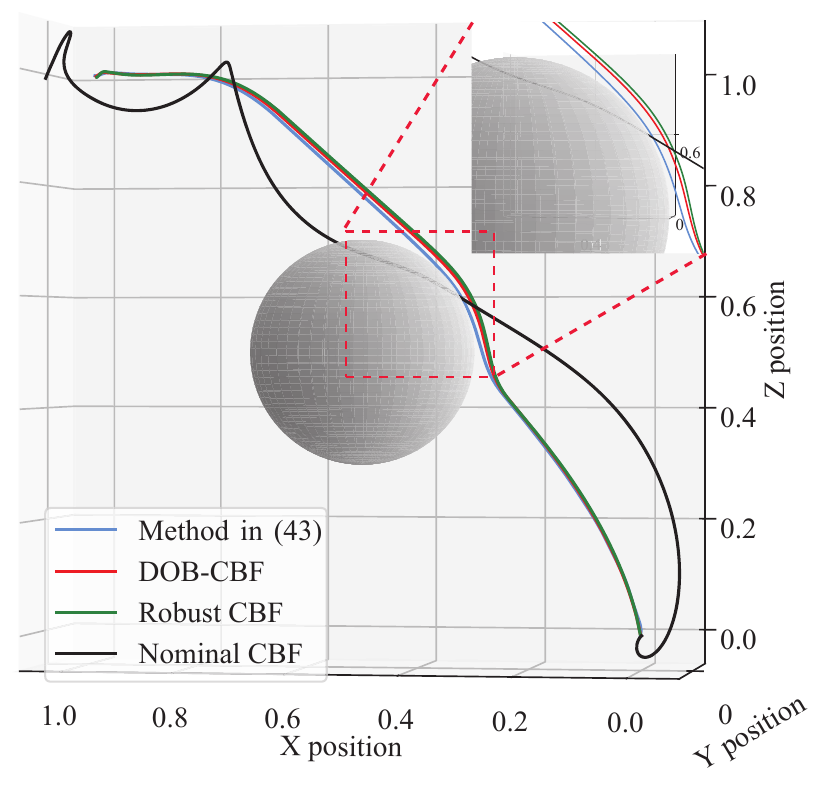} 
        \caption{Trajectories of the blimp.}
        \label{fig:sub_a}
    \end{subfigure}

    \begin{subfigure}[t]{0.48\linewidth} 
        \centering
        \includegraphics[width=\linewidth]{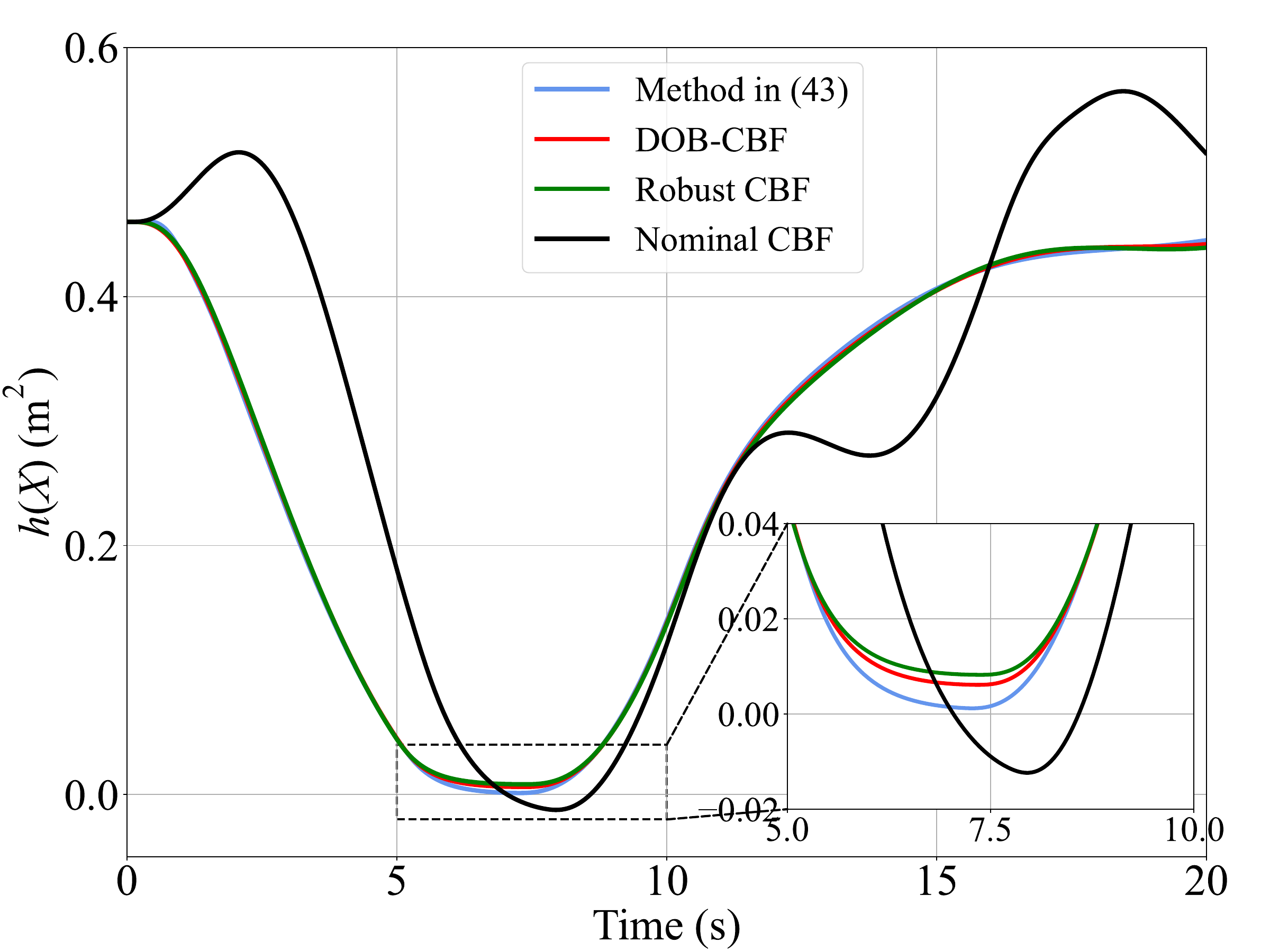} 
        \caption{$h(\bm X)$ of different comparison methods.}
        \label{fig:sub_b}
    \end{subfigure}
    \hfill
    \begin{subfigure}[t]{0.48\linewidth} 
        \centering
        \includegraphics[width=\linewidth]{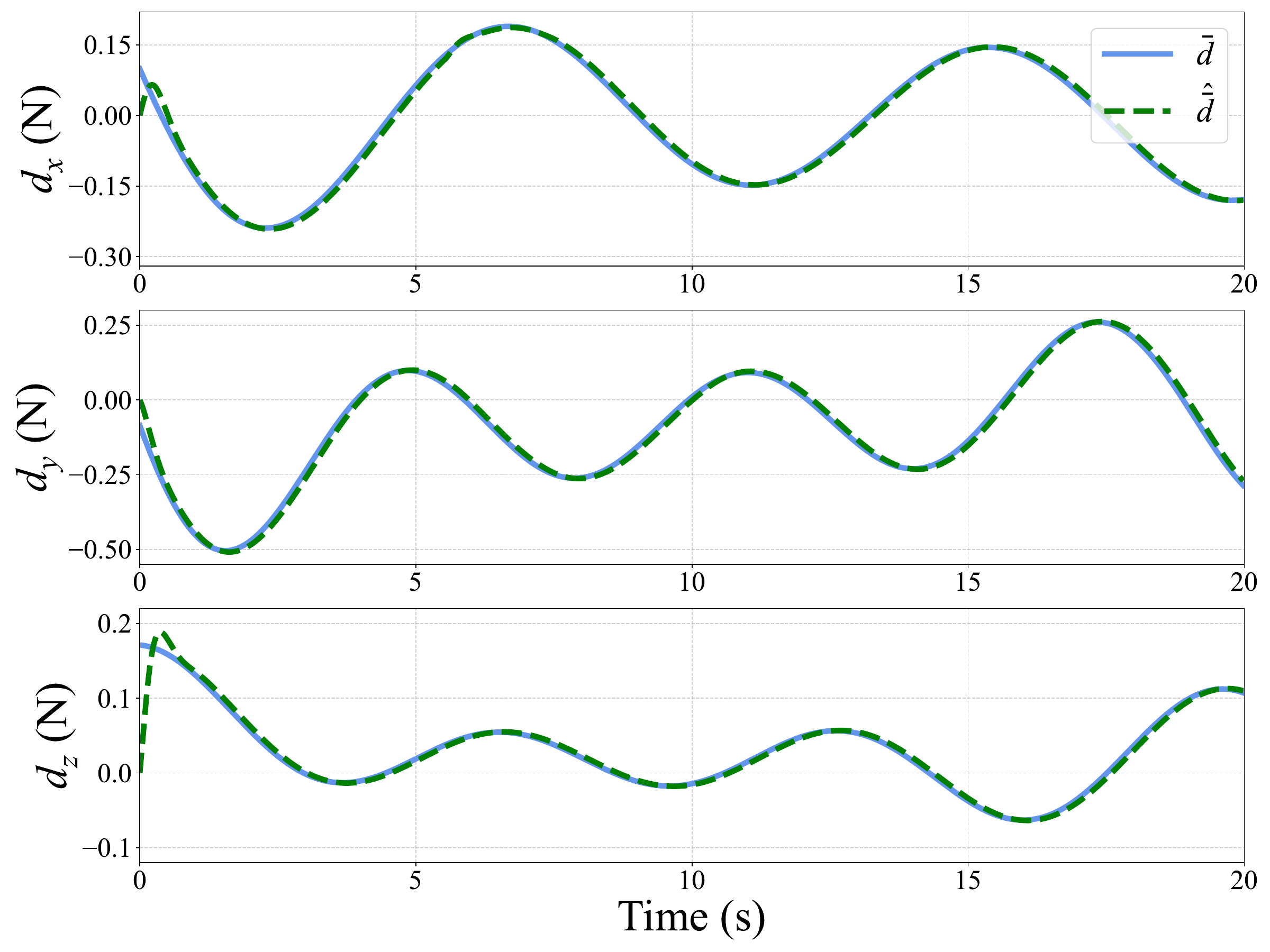} 
        \caption{Estimated disturbance $\hat{\bm d}$ of the RISE-based observer.}
        \label{fig:sub_c}
    \end{subfigure}

    \caption{Simulation results of blimp avoiding a single obstacle.}
    \label{fig:main}
    \vspace{-0.5cm}
\end{figure}

\begin{figure}[t]
    \centering
    \begin{subfigure}[t]{1.0\linewidth} 
        \centering
        \includegraphics[width=0.95\linewidth]{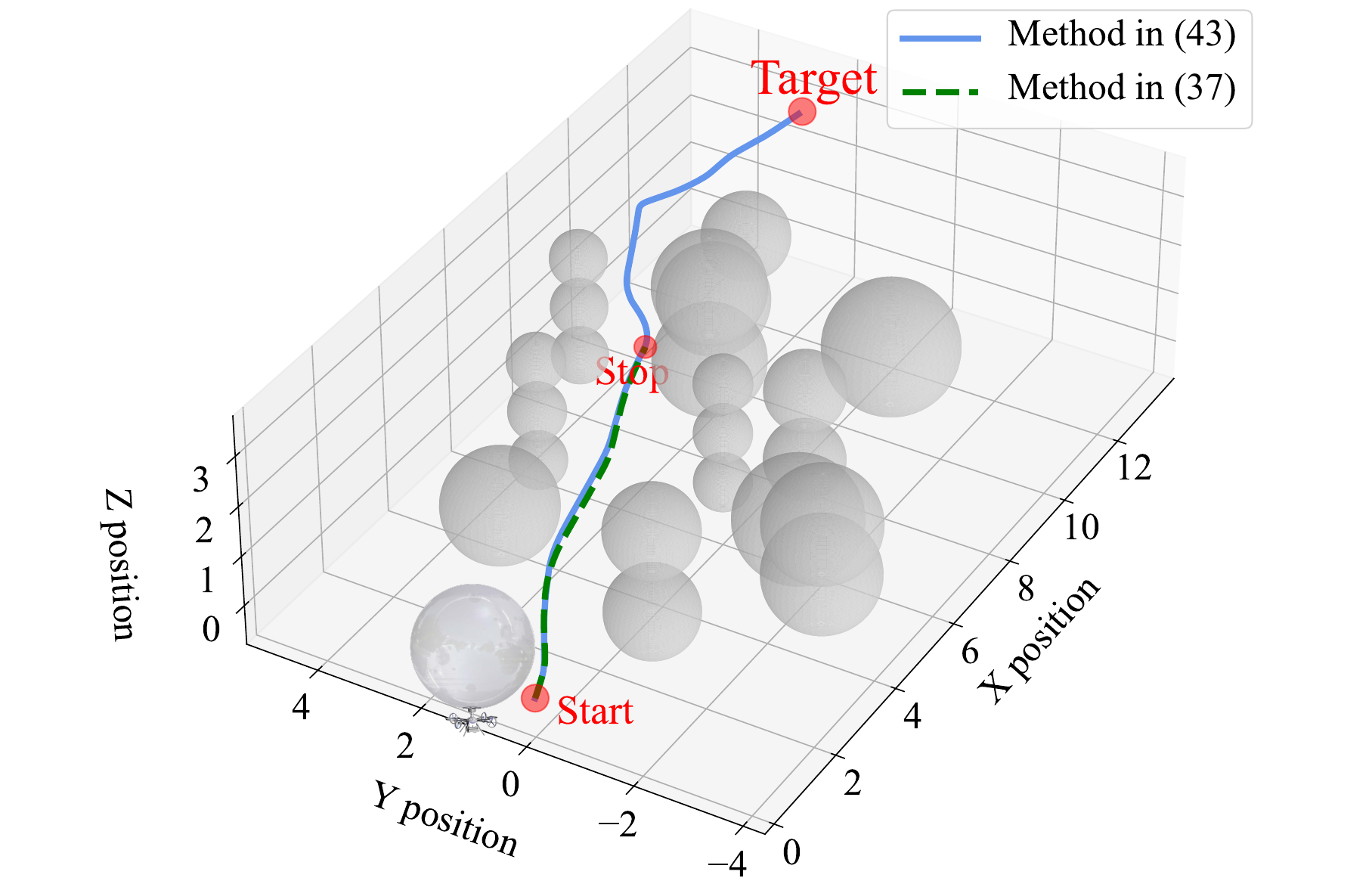} 
        \caption{Trajectories of the blimp.}
        \label{fig:sub_a}
    \end{subfigure}
    \begin{subfigure}[t]{0.43\linewidth} 
        \centering
        \includegraphics[width=\linewidth]{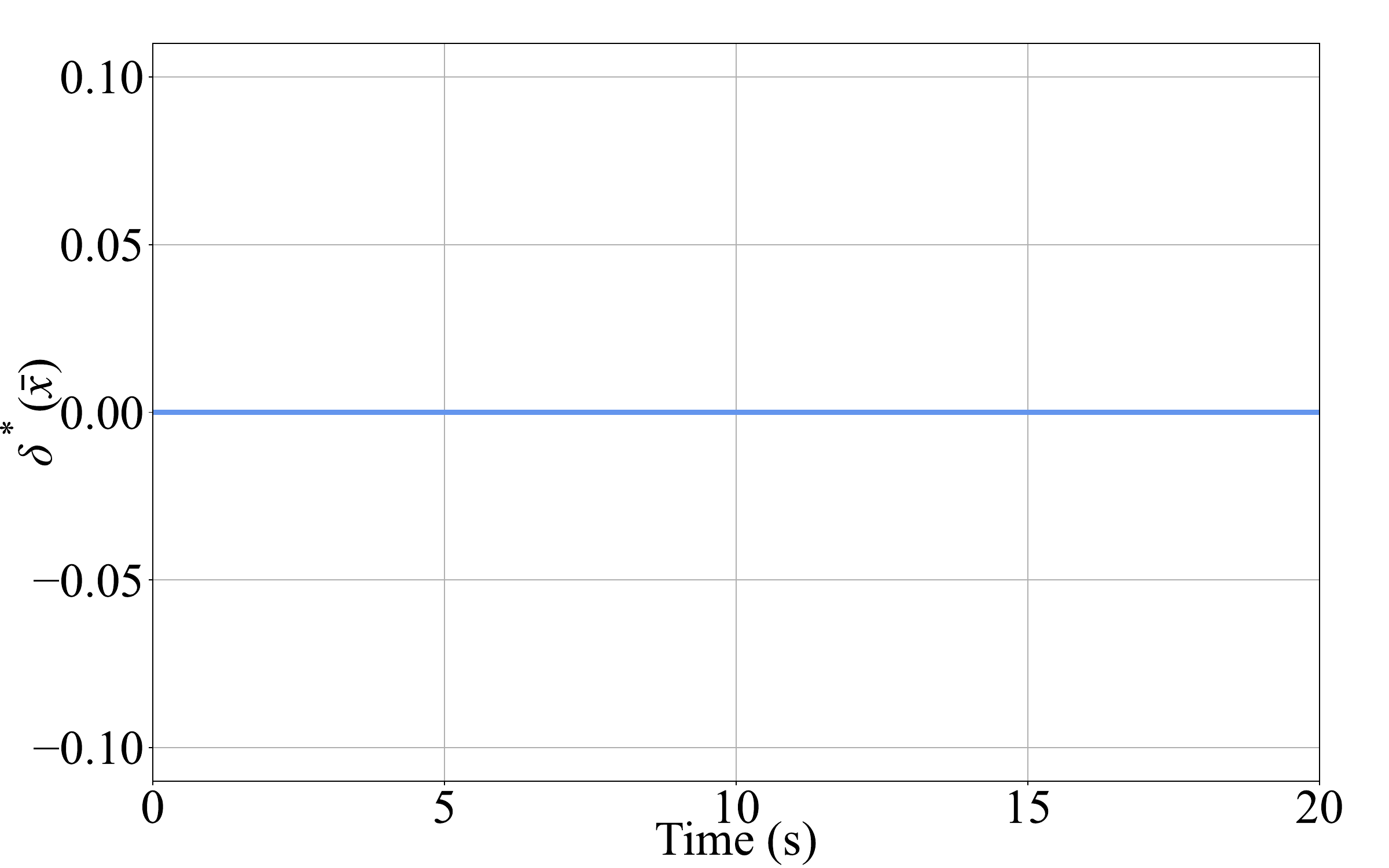} 
        \caption{$\delta^*(\x)$ of method in (\ref{equ:qp}).}
        \label{fig:sub_b}
    \end{subfigure}
    \hfill
    \begin{subfigure}[t]{0.43\linewidth} 
        \centering
        \includegraphics[width=\linewidth]{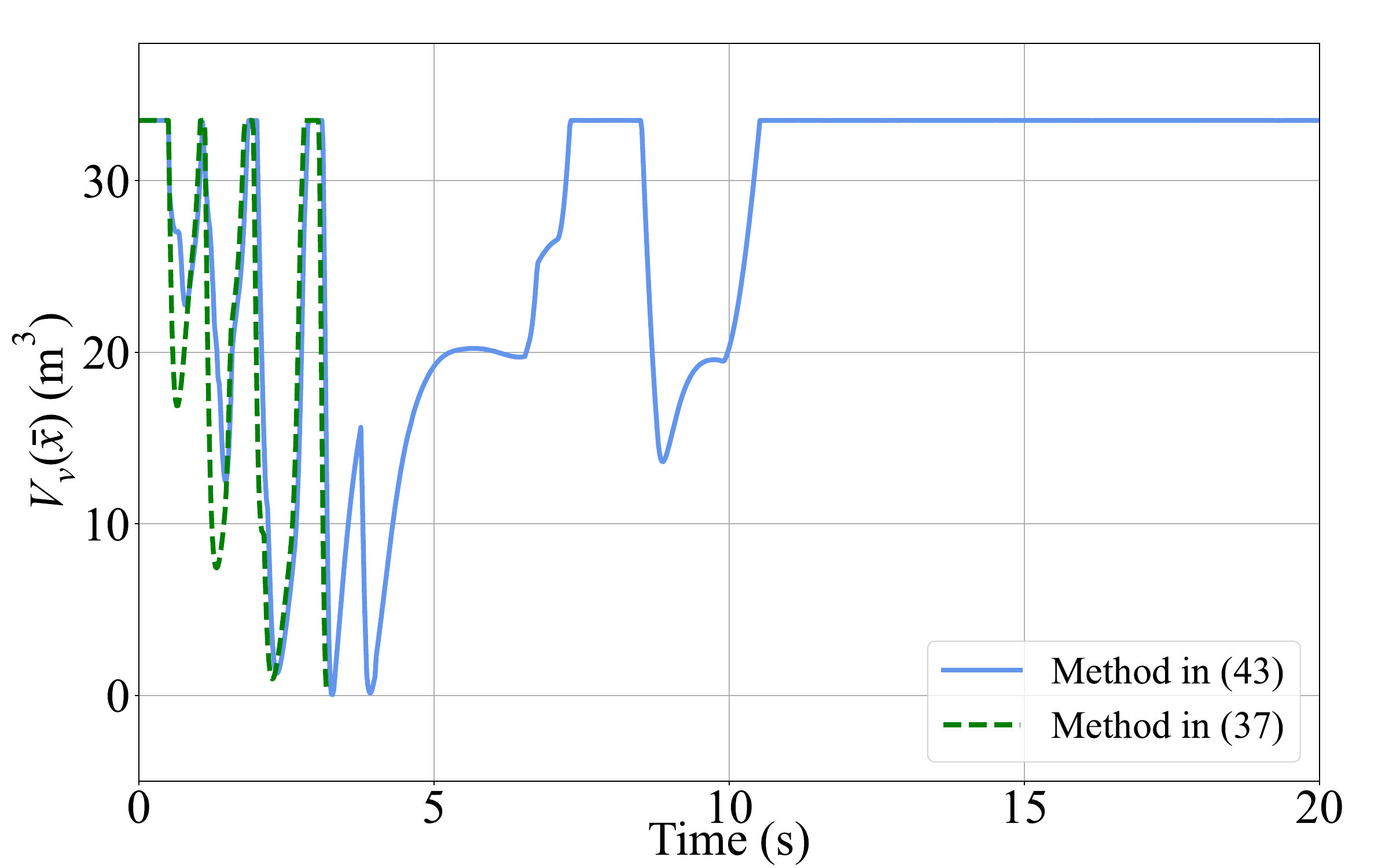} 
        \caption{$V_v(\x)$ of two methods.}
        \label{fig:sub_c}
    \end{subfigure}

    \begin{subfigure}[t]{0.45\linewidth} 
        \centering
        \includegraphics[width=\linewidth]{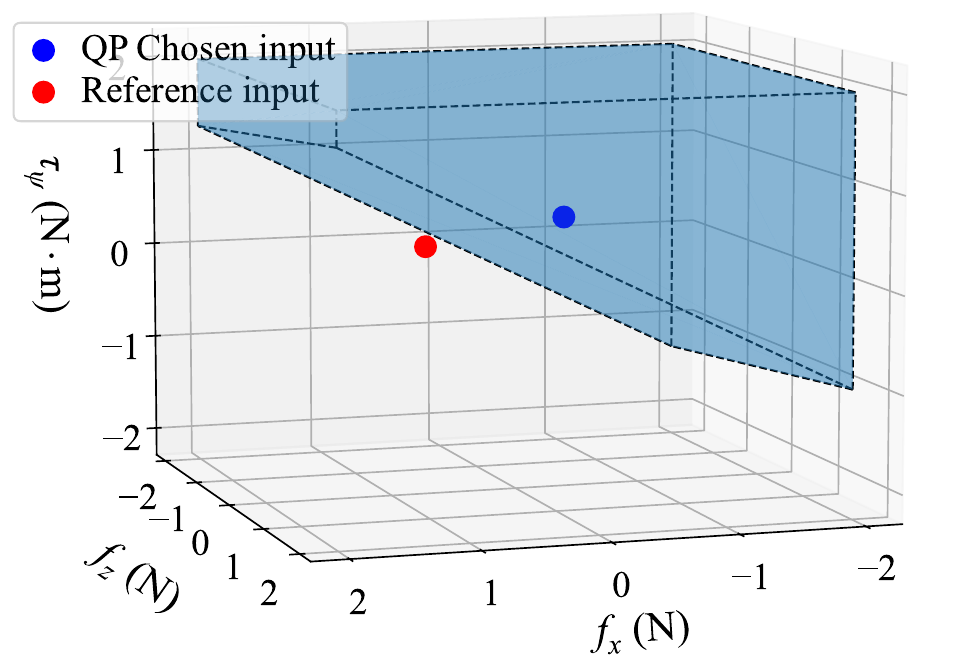} 
        \caption{Feasible space of method in (\ref{equ:qp}) at $t=2~\text{s}$.}
        \label{fig:sub_d}
    \end{subfigure}
    \hfill
    \begin{subfigure}[t]{0.45\linewidth} 
        \centering
        \includegraphics[width=\linewidth]{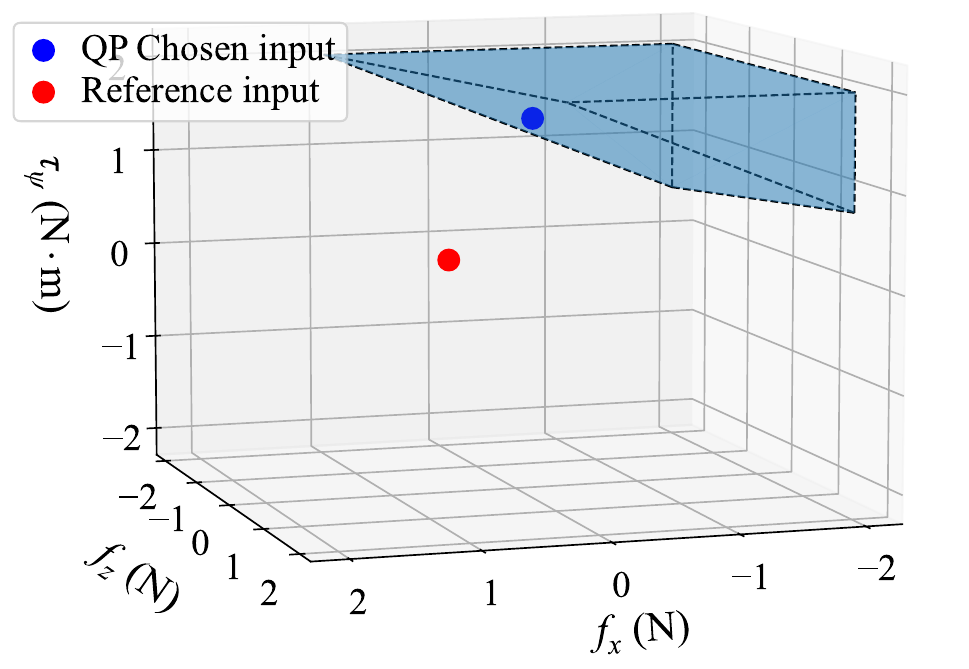} 
        \caption{Feasible space of method in (\ref{equ:cbfqp1}) at $t=2~\text{s}$.}
        \label{fig:sub_e}
    \end{subfigure}

    \begin{subfigure}[t]{0.45\linewidth} 
        \centering
        \includegraphics[width=\linewidth]{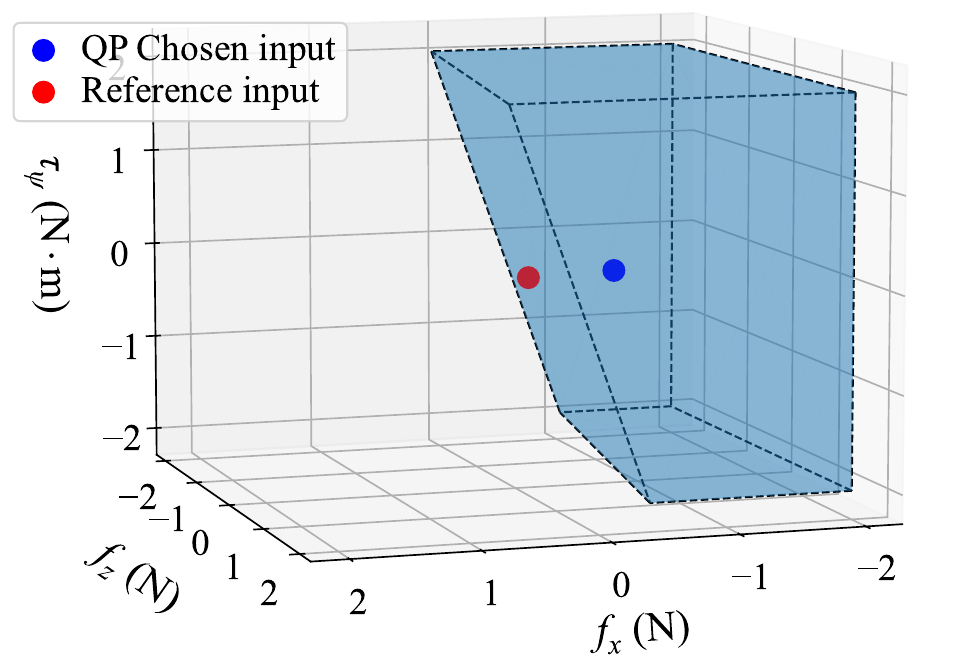} 
        \caption{Feasible space of method in (\ref{equ:qp}) at $t=3.2~\text{s}$.}
        \label{fig:sub_f}
    \end{subfigure}
    \hfill
    \begin{subfigure}[t]{0.45\linewidth} 
        \centering
        \includegraphics[width=\linewidth]{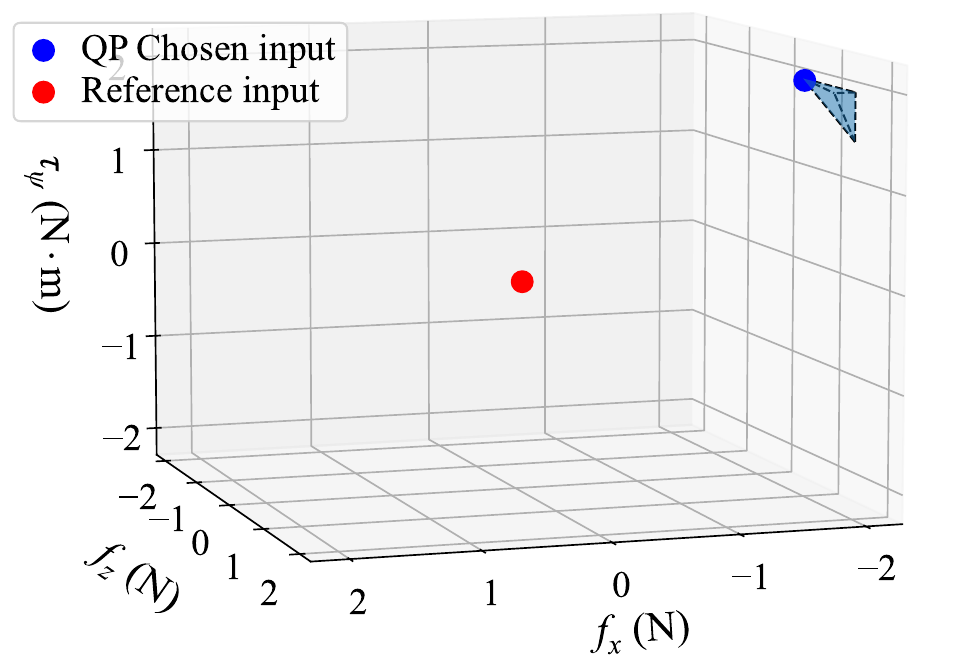} 
        \caption{Feasible space of method in (\ref{equ:cbfqp1}) at $t=3.2~\text{s}$.}
        \label{fig:sub_g}
    \end{subfigure}

    \caption{Simulation results of blimp avoiding multiple obstacles.}
    \label{fig:main1}
    \vspace{-5mm}
\end{figure}

\section{Simulation and Experimental Results}\label{sec:e}
\subsection{Simulation results}
To validate the effectiveness of the proposed control method, numerical simulations are conducted based on the underactuated blimp model from reference \cite{10679568}, where the system state $\bm X = [x, y, z, \psi]^\top \in \reals^4$ consists of the position coordinates $(x, y, z)$ and the yaw angle $\psi$, with control inputs consisting of the propulsion forces $f_x$ and $f_z$ along the $x$- and $z$- axes, respectively, and the yaw moment $\tau_\psi$ about the vertical axis.
In addition, force disturbances are considered in the $X$, $Y$, and $Z$ directions.
The blimp is expected to reach the target position $(x_{\text{t}},y_{\text{t}},z_{\text{t}})$ while avoiding collisions with spherical obstacles of varying sizes distributed in the state space. The feedback control law for the propulsion forces and yaw moment are given as follows:
\begin{align*}
f_x &= -k_v\left( v - k_x e_d \cos(e_\psi) \right) \\
f_z &= -k_w\left( \dot{z} + k_z e_z \right),
~N = -k_\psi e_\psi,
\end{align*}
where $v=\sqrt{\dot{x}^2+\dot{y}^2}, e_d=\sqrt{(x - x_{\text{t}})^2 + (y - y_{\text{t}})^2},e_\psi = \psi - \text{atan}(\frac{y_{\text{t}} - y}{x_{\text{t}} - x}), e_z = z - z_{\text{t}}$. The reference input incorporates the estimation results of the observer to enhance robustness and control performance.
The barrier function is defined as the signed distance to these spherical obstacles, that is, $h_i(\bm X)=d_i(\bm X, \bm S_i)$, where $d_i(\bm X, \bm S_i)=\sqrt{(x-x_i)^2+(y-y_i)^2+(z-z_i)^2}-R_i$ is the distance from the current state $\bm X$ to the sphere $\bm S_i$, which has a radius $R_i$ and is centered at $(x_i, y_i, z_i)$.  Since the input relative degree of $h_i(\bm X)$ is two, the HOCBF framework is applied to enforce safety constraints.
For this purpose, the parameters of HOCBF and VCBF are chosen as $k_i^1=10.0, k_i^2=3.55$, $\lambda_{V_v}=3.0$. To estimate the unknown disturbance, the parameters of the RISE-based observer are selected as $\alpha=5.0,\beta=6.0,\gamma=4.0$.

\emph{Test 1:} In this test, the proposed method is compared with the other there approaches (DOB-CBF\cite{10156095}, Robust CBF\cite{emam2019robust}, Nominal CBF (\ref{equ:cbfqp})) to validate its effectiveness in reducing conservativeness.
The blimp is required to move from the initial position $(0.0 ~\text{m},0.0 ~\text{m},0.0 ~\text{m})$ to the target position $(1.0 ~\text{m},0.0 ~\text{m},1.0 ~\text{m})$, avoiding an obstacle located at $(0.5~\text{m},0.0~\text{m},0.5~\text{m})$ with safety distance of $0.2~\text{m}$.
The results are shown in Fig. \ref{fig:main}.
Fig. \ref{fig:main}(\subref{fig:sub_a}) illustrates the trajectories of the blimp under different methods.
The proposed method, along with the robust methods presented in \cite{10156095} and \cite{emam2019robust}, can effectively handle disturbances, enabling the blimp to safely navigate around the obstacle and reach the target position.
In contrast, the nominal CBF fails to ensure the safety of the blimp.
The evolution of $h(\bm X)$ over time is shown in Fig. \ref{fig:main}(\subref{fig:sub_b}).
The blue curve represents the proposed method.
It is obvious that the value of $h(\bm X)$ of the proposed method is lower than those of the other methods while staying consistently above zero, indicating that the proposed method makes the blimp safe and has a lower conservativeness.
The simulation results in Fig. \ref{fig:main}(\subref{fig:sub_a}) and \ref{fig:main}(\subref{fig:sub_b}) demonstrate that the proposed method has a better tradeoff between performance and safety.
Fig. \ref{fig:main}(\subref{fig:sub_c}) shows the results of the RISE-based observer, demonstrating its ability to accurately estimate the disturbances.

\emph{Test 2:} In this test, the performance of the proposed VCBF is evaluated to demonstrate its ability to enhance the compatibility of candidate CBFs.
The blimp is expected to navigate around multiple obstacles in the state space from its initial position $(0.0 ~\text{m},0.0 ~\text{m},0.0 ~\text{m})$ and fly to the target position $(14.0 ~\text{m},3.0 ~\text{m},2.5 ~\text{m})$.
Fig. \ref{fig:main1} shows the performance of methods in DOB-VCBF-QP (\ref{equ:qp}) and DOB-CBF-QP (\ref{equ:cbfqp1}).
The trajectories of the blimp are illustrated in Fig. \ref{fig:main1}(\subref{fig:sub_a}).
In the presence of multiple obstacles, the method in (\ref{equ:qp}) ensures the safe arrival of the blimp at the target position, while the method in (\ref{equ:cbfqp1}) causes the blimp to stop after a certain period.
Fig. \ref{fig:main1}(\subref{fig:sub_b}) shows that $\delta^*(\x)\equiv0$, which, according to Theorem \ref{theorem3}, implies $V_v(\x)\geq0$ for all $t\geq0$.
This is further supported by Fig. \ref{fig:main1}(\subref{fig:sub_c}), where the method in (\ref{equ:qp}) consistently maintains higher $V_v(\x)$ values, ensuring feasibility and safety throughout the process.
Alternatively, the feasible space of the method in (\ref{equ:cbfqp1}) may vanish in scenarios with complex obstacle configurations.
This phenomenon can be observed in Fig. \ref{fig:main1}(\subref{fig:sub_g}).
Fig. \ref{fig:main1}(\subref{fig:sub_d})-(\subref{fig:sub_g}) provide the visualization of the feasible spaces for the two methods at different time instances. 
It can be seen that when the reference input is outside the feasible space, VCBF can map it to the interior of the feasible space.
This is done to prevent the volume of feasible space from shrinking too fast.
The method in (\ref{equ:cbfqp1}) maps it to a point on the boundary.

Besides, two simulation cases are designed to avoid the randomness of simulation results,
The robustness and effectiveness of the controller are evaluated through Monte Carlo simulations.
The initial states $(x_0, y_0, z_0)$ and the RISE-based observer parameters $(\alpha, \beta, \gamma)$ are randomized.
The average simulation runtimes, which reflect the performance under these randomized conditions, are summarized in Table \ref{tab:table1}.
The running time is defined as the smaller of two values: 20 s (maximum simulation duration) or the time when DOB-VCBF-QP ($T_1$) or DOB-CBF-QP ($T_2$) becomes infeasible.
As shown in the results, the performance of the method proposed in (\ref{equ:qp}) consistently outperforms that of the method in (\ref{equ:cbfqp1}) in both cases.

\begin{table}[t]
  \centering
  \caption{\small{Average Simulation Runtimes and Parameter Ranges}}
  \vspace{-0.1cm}
  \label{tab:table1}
  \setlength{\tabcolsep}{4pt} 
  \begin{tabular}{@{\hskip 1pt}c@{\hskip 1pt}|@{\hskip 1pt}c@{\hskip 2pt}c@{\hskip 2pt}c@{\hskip 2pt}c@{\hskip 2pt}c@{\hskip 2pt}c@{\hskip 2pt}c@{\hskip 2pt}c@{\hskip 1pt}}
    \hline
     & $T_1$ & $T_2$ & $x_0$ & $y_0$ & $z_0$ & $\alpha$ & $\beta$ & $\gamma$ \\ \hline
    Case 1    & \footnotesize 18.22 & \footnotesize 10.67 & \footnotesize [-0.1,0.1] & \footnotesize [-0.1,0.1] & \footnotesize [0.0,0.2] & \footnotesize 5.0 & \footnotesize 6.0 & \footnotesize 4.0 \\ \hline
    Case 2    & \footnotesize 16.21 & \footnotesize 8.67  & \footnotesize 0.0        & \footnotesize 0.0        & \footnotesize 0.0      & \footnotesize [4.6,6.0] & \footnotesize [5.0,7.0] & \footnotesize [3.6,4.4] \\ \hline
  \end{tabular}
  \vspace{-5mm}
\end{table}
\subsection{Experimental results}
In this section, the performance of the proposed method is tested on an Ackermann steering robot.
The purpose of this experiment is twofold: first, to verify that the proposed control method can ensure system safety by preventing collisions while exhibiting lower conservatism; and second, to demonstrate that the proposed method effectively reduces the sensitivity of the system to controller parameters.
The robot is modeled with position $\bm p=[x, y]^\top$, velocity $v$, heading angle $\psi$, with the specific dynamics $\dot x=v\cos(\psi), \dot y=v\sin(\psi), v=(a+d)/m, \dot \psi=\omega$. $m$ represents the mass of the robot. $d$ is the unknown disturbance.
The control inputs are $a$ and $\omega$. The robot is expected to reach the target position $\bm p_r=[x_r,y_r]^\top$ without any collisions.
The feedback controller is considered as follows:
\begin{align*}
&\psi_{r} = \text{atan}(\frac{{y_{r}} - y}{{x_{r}} - x}), ~e_{\psi} = \psi - \psi_{r}, ~ \omega = -k_{\omega} e_{\psi},\\
&v_{r} = k_p\sqrt{(x - {x_{r}})^2 \!+\! (y - {y_{r}})^2}\cos(e_{\psi}), a = -k_v (v - v_{r}),
\end{align*}
where $k_\omega,~k_p,~k_v>0$ are designed control parameters.
The reference input is further refined by incorporating the observer's estimation results into the feedback input to improve performance and robustness.
The barrier function is designed as $h_i=\bm d_i^\top\bm d_i-\bm d_{i_{\rm min}}^\top\bm d_{i_{\rm min}}$ with $\bm d_i$ being the distance of the robot to the center of the $i{\rm th}$ obstacle and $\bm d_{i_{\rm min}}$ being the safety distance of the obstacle.
\begin{figure}[t]
  \centering
  \includegraphics[width=3.4in]{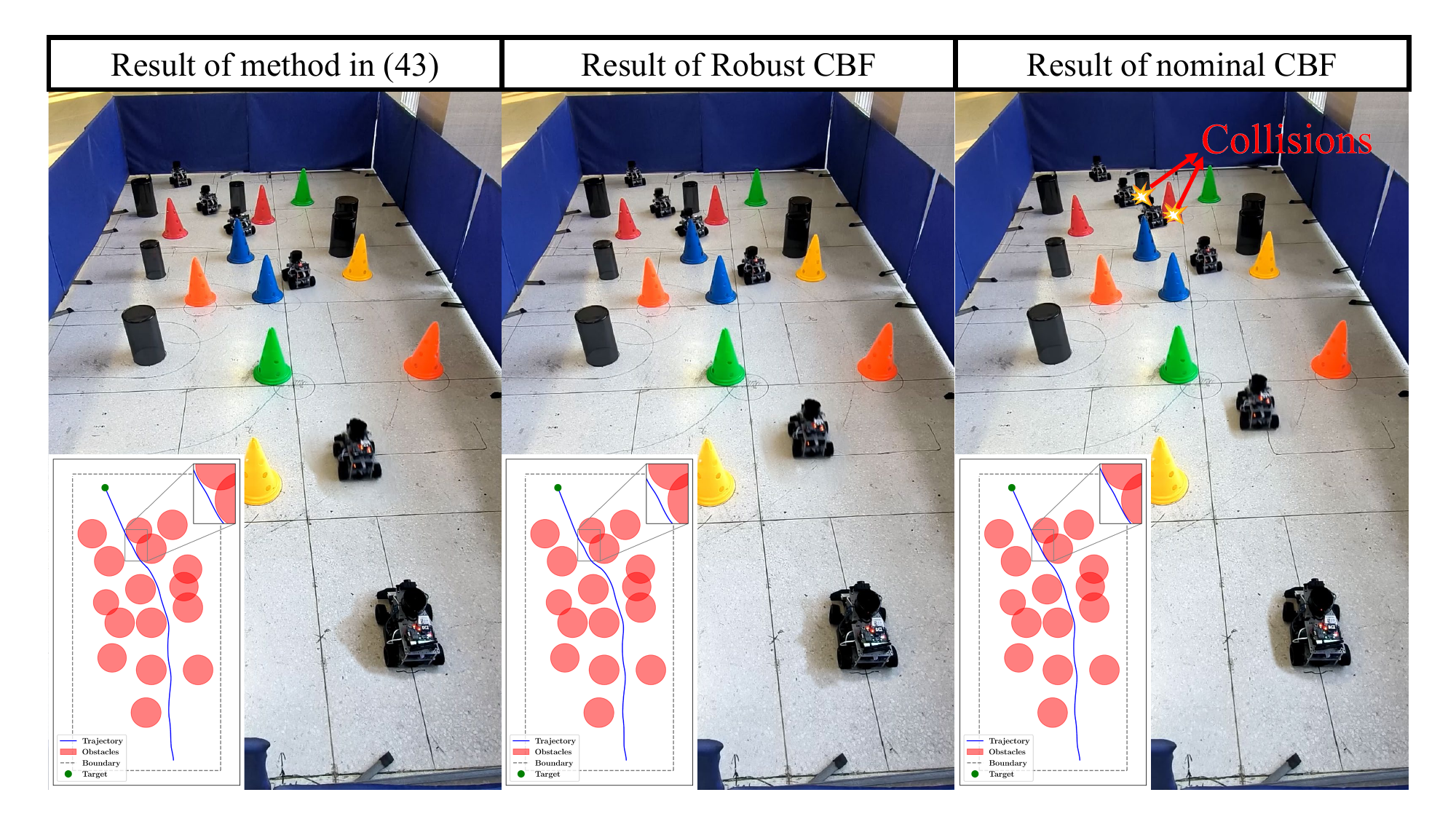}
  \captionsetup{font={small}}
  \caption{{ Trajectories to the target position with obstacle avoidance under different safety filters}}
  \label{fig:exp1}
  \vspace{-0.5cm}
\end{figure}

Fig. \ref{fig:exp1} illustrates how the controller in (\ref{equ:qp}) enables the Ackermann steering robot to avoid obstacles and reach the target position, along with comparative results from two methods proposed in \cite{emam2019robust} and \cite{tan2021high}.
The proposed method achieves an optimal balance between safety and efficiency by effectively avoiding obstacles with the required minimum distance.
It ensures collision avoidance while minimizing deviations from the reference input, avoiding unnecessary conservativeness.
In contrast, the robust CBF proposed in \cite{emam2019robust} enhances the system's robustness to external disturbances and avoids collisions with obstacles.
However, its conservativeness leads to trajectories with much larger distance from obstacles which may make it meet difficulties to reach target position especially in narrow environments.
The nominal CBF method \cite{tan2021high} fails to adjust the system state in a timely manner under external disturbances, ultimately leading to collisions.

After ensuring the robust safety of the controller, it is further verified that the controller (\ref{equ:qp}) can enhance the feasibility of the QP problem and reduce sensitivity to control parameters.
Five distinct values of $k_p$ in $\bm u_{\rm{ref}}$ is selected to implement sensitivity analysis.
The experiment results are shown in Fig. \ref{fig:exp2}.
It can be seen that for the controller with VCBF, regardless of the choice of $k_p$, the QP problem is always feasible, the robot can avoid collisions and efficiently reach the target position.
If VCBF is not used, when $k_p=0.2$, the robot's low reference velocity $v_r$ limits its ability to adjust its trajectory.
As a result, the controller prioritizes collision avoidance to ensure safety, neglecting the goal of reaching the target position.
This quickly renders the QP problem infeasible, causing the robot to stop near the obstacle to maintain safety.
This fact clearly demonstrates that the introduction of VCBF improves the feasibility of the QP problem and reduces sensitivity to control parameters.
The video showcasing the results of these simulations and experiments is available at \url{https://www.youtube.com/watch?v=0gfpCG5fKp0}.
\begin{figure}[t]
  \centering
  \includegraphics[width=3.4in]{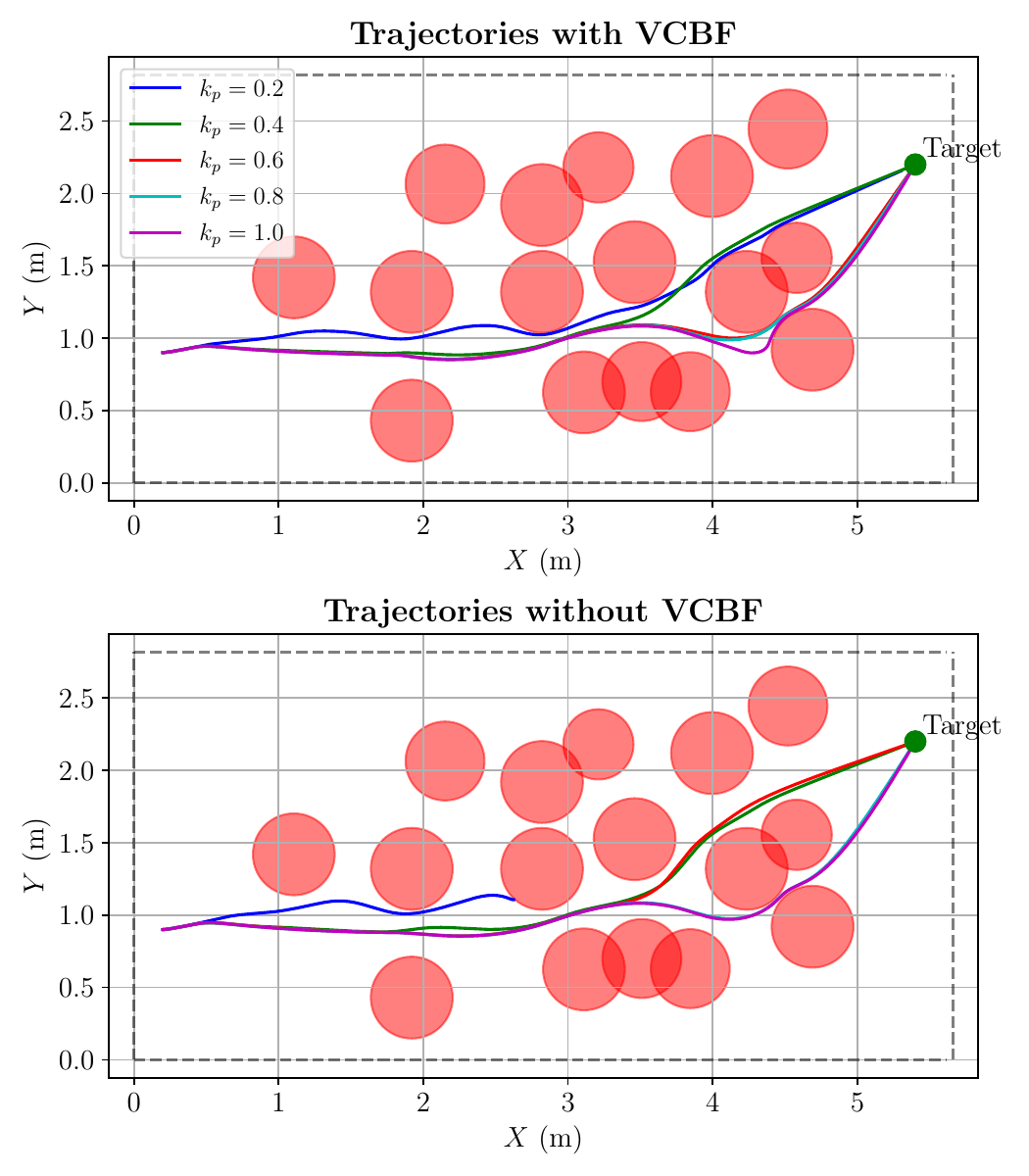}
  \vspace{-2mm}
  \captionsetup{font={small}}
  \caption{{ Trajectories for the five cases of $k_p = 0.2, 0.4, 0.6, 0.8, 1.0$.}}
  \label{fig:exp2}
  \vspace{-5mm}
\end{figure}
\section{Conclusions}\label{sec:f}
This paper proposes a robust control framework for uncertain dynamical systems with multiple CBF and input constraints, ensuring safety and maintaining system performance.
A RISE-based observer is used to estimate uncertainties, with its error bound integrated into the controller to reduce conservativeness.
To address conflicts among multiple CBFs, the VCBF is introduced to analyze and preserve the feasible space of the QP problem under disturbances.
Furthermore, a DOB-VCBF-QP based control law is developed to ensure system safety while maintaining compatibility between CBF constraints and input constraints.
Simulations and experiments validate the effectiveness of the proposed method in achieving safety and robust performance under uncertainty.
Future research will focus on exploring the conditions necessary to maintain both compatibility and safety simultaneously.
\end{spacing}
\begin{spacing}{1}
\bibliographystyle{Bibliography/IEEEtranTIE}
\bibliography{Bibliography/BIB_xx-TIE-xxxx}

\begin{thebibliography}{10}
\providecommand{\url}[1]{#1}
\csname url@samestyle\endcsname
\providecommand{\newblock}{\relax}
\providecommand{\bibinfo}[2]{#2}
\providecommand{\BIBentrySTDinterwordspacing}{\spaceskip=0pt\relax}
\providecommand{\BIBentryALTinterwordstretchfactor}{4}
\providecommand{\BIBentryALTinterwordspacing}{\spaceskip=\fontdimen2\font plus
\BIBentryALTinterwordstretchfactor\fontdimen3\font minus
  \fontdimen4\font\relax}
\providecommand{\BIBforeignlanguage}[2]{{%
\expandafter\ifx\csname l@#1\endcsname\relax
\typeout{** WARNING: IEEEtran.bst: No hyphenation pattern has been}%
\typeout{** loaded for the language `#1'. Using the pattern for}%
\typeout{** the default language instead.}%
\else
\language=\csname l@#1\endcsname
\fi
#2}}
\providecommand{\BIBdecl}{\relax}
\BIBdecl

\bibitem{ames2019control}
A.~D. Ames, S.~Coogan, M.~Egerstedt, G.~Notomista, K.~Sreenath, and P.~Tabuada,
  ``Control barrier functions: Theory and applications,'' in \emph{2019 18th
  European Control Conference}, pp. 3420--3431, 2019.

\bibitem{xie2025certificated}
J.~Xie, S.~Zhao, L.~Hu, and H.~Gao, ``Certificated actor-critic: Hierarchical
  reinforcement learning with control barrier functions for safe navigation,''
  \emph{arXiv preprint arXiv:2501.17424}, 2025.

\bibitem{10505850}
S.~Wang, Y.~Wang, Z.~Miao, X.~Wang, and W.~He, ``Dual model predictive control
  of multiple quadrotors with formation maintenance and collision avoidance,''
  \emph{IEEE Transactions on Industrial Electronics}, vol.~71, no.~12, pp.
  16\,037--16\,046, 2024.

\bibitem{9811271}
S.~Wu, T.~Liu, Q.~Niu, and Z.~Jiang, ``Continuous safety control of mobile
  robots in cluttered environments,'' \emph{IEEE Robotics and Automation
  Letters}, vol.~7, no.~3, pp. 8012--8019, 2022.

\bibitem{10601510}
Z.~Cao, J.~Mao, C.~Zhang, C.~Cui, and J.~Yang, ``Safety-critical generalized
  predictive control for speed regulation of pmsm drives based on dynamic
  robust control barrier function,'' \emph{IEEE Transactions on Industrial
  Electronics}, vol.~72, no.~2, pp. 1881--1891, 2025.

\bibitem{saveriano2019learning}
M.~Saveriano and D.~Lee, ``Learning barrier functions for constrained motion
  planning with dynamical systems,'' in \emph{2019 IEEE/RSJ International
  Conference on Intelligent Robots and Systems}, pp. 112--119, 2019.

\bibitem{10363340}
C.~Zhang, S.~Lin, H.~Wang, Z.~Chen, S.~Wang, and Z.~Kan, ``Data-driven safe
  policy optimization for black-box dynamical systems with temporal logic
  specifications,'' \emph{IEEE Transactions on Neural Networks and Learning
  Systems}, vol.~36, no.~2, pp. 3870--3877, 2025.

\bibitem{7937882}
P.~Glotfelter, J.~Cort{\'e}s, and M.~Egerstedt, ``Nonsmooth barrier functions
  with applications to multi-robot systems,'' \emph{IEEE Control Systems
  Letters}, vol.~1, no.~2, pp. 310--315, 2017.

\bibitem{9247270}
P.~Glotfelter, J.~Cort{\'e}s, and M.~Egerstedt, ``A nonsmooth approach to
  controller synthesis for boolean specifications,'' \emph{IEEE Transactions on
  Automatic Control}, vol.~66, no.~11, pp. 5160--5174, 2021.

\bibitem{black2023consolidated}
M.~Black and D.~Panagou, ``Consolidated control barrier functions: Synthesis
  and online verification via adaptation under input constraints,'' \emph{arXiv
  preprint arXiv:2304.01815}, 2023.

\bibitem{10383597}
M.~Black and D.~Panagou, ``Adaptation for validation of consolidated control
  barrier functions,'' in \emph{2023 62nd IEEE Conference on Decision and
  Control}, pp. 751--757, 2023.

\bibitem{parwana2023feasible}
H.~Parwana, M.~Black, B.~Hoxha, H.~Okamoto, G.~Fainekos, D.~Prokhorov, and
  D.~Panagou, ``Feasible space monitoring for multiple control barrier
  functions with application to large scale indoor navigation,'' \emph{arXiv
  preprint arXiv:2312.07803}, 2023.

\bibitem{emam2019robust}
Y.~Emam, P.~Glotfelter, and M.~Egerstedt, ``Robust barrier functions for a
  fully autonomous, remotely accessible swarm-robotics testbed,'' in \emph{2019
  IEEE 58th Conference on Decision and Control}, pp. 3984--3990, 2019.

\bibitem{10230907}
J.~Peng, H.~Wang, S.~Ding, J.~Liang, and Y.~Wang, ``Robust high-order control
  barrier functions-based optimal control for constrained nonlinear systems
  with safety-stability perspectives,'' \emph{IEEE Transactions on Automation
  Science and Engineering}, vol.~21, no.~4, pp. 4948--4958, 2024.

\bibitem{10591233}
Z.~R. Pan, W.~Ren, and X.~Sun, ``Distributed event-triggered observer-based
  control for linear networked multi-agent systems,'' in \emph{2024 European
  Control Conference}, pp. 1171--1176, 2024.

\bibitem{10403533}
S.~Zhang, D.~Zhai, J.~Lin, Y.~Xiong, Y.~Xia, and M.~Wei, ``Eso-based
  safety-critical control for robotic systems with unmeasured velocity and
  input delay,'' \emph{IEEE Transactions on Industrial Electronics}, vol.~71,
  no.~10, pp. 13\,053--13\,063, 2024.

\bibitem{8970337}
W.~Zhu, H.~Du, S.~Li, and X.~Yu, ``Design of output-based finite-time
  convergent composite controller for a class of perturbed second-order
  nonlinear systems,'' \emph{IEEE Transactions on Systems, Man, and
  Cybernetics: Systems}, vol.~51, no.~11, pp. 6768--6778, 2021.

\bibitem{9447165}
D.~Xu, B.~Ding, B.~Jiang, W.~Yang, and P.~Shi, ``Nonsingular fast terminal
  sliding mode control for permanent magnet linear synchronous motor via
  high-order super-twisting observer,'' \emph{IEEE/ASME Transactions on
  Mechatronics}, vol.~27, no.~3, pp. 1651--1659, 2022.

\bibitem{10156095}
Y.~Wang and X.~Xu, ``Disturbance observer-based robust control barrier
  functions,'' in \emph{2023 American Control Conference}, pp. 3681--3687,
  2023.

\bibitem{10388430}
J.~Sun, J.~Yang, and Z.~Zeng, ``Safety-critical control with control barrier
  function based on disturbance observer,'' \emph{IEEE Transactions on
  Automatic Control}, vol.~69, no.~7, pp. 4750--4756, 2024.

\bibitem{10617810}
Y.~Jiang, C.~Wang, B.~Li, L.~Song, and X.~Guan, ``A robust safety-critical
  control framework for control affine systems with applications to auvs,''
  \emph{IEEE/ASME Transactions on Mechatronics},
  \href{http://dx.doi.org/10.1109/TMECH.2024.3427372}{DOI
  10.1109/TMECH.2024.3427372}, 2024.

\bibitem{patil2021exponential}
O.~S. Patil, A.~Isaly, B.~Xian, and W.~E. Dixon, ``Exponential stability with
  rise controllers,'' \emph{IEEE Control Systems Letters}, vol.~6, pp.
  1592--1597, 2021.

\bibitem{tan2021high}
X.~Tan, W.~S. Cortez, and D.~V. Dimarogonas, ``High-order barrier functions:
  Robustness, safety, and performance-critical control,'' \emph{IEEE
  Transactions on Automatic Control}, vol.~67, no.~6, pp. 3021--3028, 2021.

\bibitem{10413553}
A.~Isaly, O.~S. Patil, H.~M. Sweatland, R.~G. Sanfelice, and W.~E. Dixon,
  ``Adaptive safety with a rise-based disturbance observer,'' \emph{IEEE
  Transactions on Automatic Control}, vol.~69, no.~7, pp. 4883--4890, 2024.

\bibitem{9647024}
T.~Yang, H.~Chen, N.~Sun, and Y.~Fang, ``Adaptive neural network output
  feedback control of uncertain underactuated systems with actuated and
  unactuated state constraints,'' \emph{IEEE Transactions on Systems, Man, and
  Cybernetics: Systems}, vol.~52, no.~11, pp. 7027--7043, 2022.

\bibitem{9788485}
T.~Yang, X.~Zhang, Y.~Fang, N.~Sun, and M.~Iwasaki, ``Observer-based adaptive
  fuzzy event-triggered control for mechatronic systems with inaccurate signal
  transmission and motion constraints,'' \emph{IEEE/ASME Transactions on
  Mechatronics}, vol.~27, no.~6, pp. 5208--5221, 2022.

\bibitem{10666861}
Z.~Liu, Z.~Zhang, W.~Lin, X.~Yu, C.~Buccella, and C.~Cecati, ``Disturbance
  rejection-based motion control of linear motors via integral sliding mode
  observer,'' \emph{IEEE Transactions on Industrial Electronics}, vol.~72,
  no.~3, pp. 3061--3071, 2025.

\bibitem{10679568}
J.~Dong, H.~Yu, B.~Lu, H.~Liu, and Y.~Fang, ``Adaptive output feedback
  trajectory tracking control of an indoor blimp: Controller design and
  experiment validation,'' \emph{IEEE Transactions on Industrial Electronics},
  vol.~72, no.~4, pp. 3960--3971, 2025.

\end{thebibliography}
\end{spacing}

\end{document}